\newif\if@fewtab\@fewtabtrue
\xdef\hourmin{\number\count255}
\xdef\hourmin{\hourmin:\ifnum\count255<10 0\fi\the\count255}}
\def\ps@draft{\let\@mkboth\@gobbletwo
    \def\@oddfoot{\hbox to 7 cm{\tiny \versionno
       \hfil}\hskip -7cm\hfil\rm\thepage \hfil {\tiny\draftdate}}
    \def\@oddhead{}
    \def\@evenhead{}\let\@evenfoot\@oddfoot}
\def\draftdate{\number\month/\number\day/\number\year\ \ \ \hourmin }
\def\apo           {\mbox{\sc s}}
\def\apoi          {\mbox{\sc s}^{-1}}
\def\be            {\begin{equation}}
\newcommand\bee[5] {\begin{eqnarray} #5 \nonumber\\[-#1.#2em]~\\[#3.#4em]~\nonumber\end{eqnarray}}
\def\bico          {_{\F}}
\def\bbico         {_{\!F}}
\def\boti          {\,{\boxtimes}\,}
\def\bulkhh        {bulk handle Hopf algebra} 
\def\C             {{\ensuremath{\mathcal C}}}
\def\CbC           {\ensuremath{\hspace{.3pt}\overline{\mathcal C}\hspace{.6pt}{\boxtimes}%
                   \hspace{1.4pt}\mathcal C}\xspace}
\def\chii          {\raisebox{.15em}{$\chi$}}
\def\chirhh        {chiral handle Hopf algebra}
\newcommand\coen[1]{\int^{#1}\hspace*{-.13em}}
\def\complex       {{\ensuremath{\mathbbm C}}}
\def\CopC          {{\ensuremath{\mathcal C\op\hspace*{.8pt}{\times}\hspace*{1.5pt}\mathcal C}}}
\def\D             {{\ensuremath{\mathcal D}}}
\def\dim           {\mathrm{dim}}
\def\dimk          {\mathrm{dim}_\ko}
\def\drin          {f_Q}
\def\ee            {\end{equation}}
\def\EndC          {{\ensuremath{\mathrm{End}_\C}}}
\def\Endk          {{\ensuremath{\mathrm{End}_\ko}}}
\def\eps           {\varepsilon}
\def\F             {{\ensuremath{F}}}
\def\findim        {fi\-ni\-te-di\-men\-si\-o\-nal}
\def\Fo            {\ensuremath{\F_{\scriptscriptstyle\!C}}}
\def\H             {{\ensuremath L}}
\def\HBimod        {{\ensuremath{H\text{-}\mathrm{Bimod}}}}
\def\HMod          {{\ensuremath{H\text{-}\mathrm{Mod}}}}
\def\Hom           {{\ensuremath{\mathrm{Hom}}}}
\def\HomA          {{\ensuremath{\mathrm{Hom}_A}}}
\def\HomH          {{\ensuremath{\mathrm{Hom}_H}}}
\def\HomHH         {{\ensuremath{\mathrm{Hom}_{H|H}}}}
\def\Homk          {{\ensuremath{\mathrm{Hom}_\ko}}}
\def\Hs            {{\ensuremath{H^*_{}}}}
\def\Hss           {{H^*_{}}}
\def\id            {\mbox{\sl id}}
\def\idA           {\ensuremath{\id_A}}
\def\idH           {\ensuremath{\id_\H}}
\def\idHs          {\ensuremath{\id_{{H^{\phantom:}}^{\!\!*}}}}
\def\idsm          {\mbox{\footnotesize\sl id}}
\def\IJ            {{\ensuremath{\mathcal I}}}
\def\K             {{\ensuremath{K}}}
\def\ko            {{\ensuremath{\Bbbk}}}
\def\Mod           {\mbox{-Mod}}
\def\ohr           {\reflectbox{$\rho$}}
\def\one           {{\bf1}}
\def\op            {^{\mathrm{op}}}
\def\oti           {\,{\otimes}\,}
\def\Oti           {{\otimes}}
\def\otik          {\,{\otimes_\ko}\,}
\def\Qq            {\mathcal Q^{\rm l}}
\def\rep           {representation}
\def\To            {\,{\to}\,}
\def\twodim        {two-di\-men\-si\-o\-nal}
\def\V             {\ensuremath{\mathscr V}}
\def\wee           {*}  
\def\Zo            {\ensuremath{Z_{\scriptscriptstyle\!C}}}
\newcommand\Includepichtft[1] {{\begin{picture}(0,0)(0,0)
                   \scalebox{.38}{\includegraphics{imgs/pic_htft_#1.eps}}\end{picture}}}
\newcommand\eqpic[4]{\begin{eqnarray}
                   \begin{picture}(#2,#3){}\end{picture}\nonumber\\
                   \raisebox{-#3pt}{ \begin{picture}(#2,#3) #4 \end{picture} }
                   \label{#1} \\~\nonumber \end{eqnarray} }
\newtheorem{theorem}{Theorem}
\newtheorem{lemma}[theorem]{Lemma}
\newtheorem{convention}[theorem]{Convention}
\newtheorem{proposition}[theorem]{Proposition}
\theoremstyle{definition}
\newtheorem{remark}[theorem]{Remark}
\begin{document}
\numberwithin{equation}{section}

\begin{flushright}
   {\sf ZMP-HH/12-2}\\
   {\sf Hamburger$\;$Beitr\"age$\;$zur$\;$Mathematik$\;$Nr.$\;$431}\\[2mm]
   January 2012
\end{flushright}
\vskip 4.5em
\begin{center} \Large\bf THE CARDY-CARTAN MODULAR INVARIANT
\\[2.4em]
  ~J\"urgen Fuchs\,$^{\,a}$,~
  ~Christoph Schweigert\,$^{\,b}$,~
  ~Carl Stigner\,$^{\,a}$
\end{center}

\vskip 9mm

\begin{center}\it$^a$
  Teoretisk fysik, \ Karlstads Universitet\\
  Universitetsgatan 21, \ S\,--\,651\,88\, Karlstad
\end{center}
\begin{center}\it$^b$
  Organisationseinheit Mathematik, \ Universit\"at Hamburg\\
  Bereich Algebra und Zahlentheorie\\
  Bundesstra\ss e 55, \ D\,--\,20\,146\, Hamburg
\end{center}
\vskip 5.3em

\noindent{\sc Abstract}
\\[3pt]
Using factorizable Hopf algebras, we construct modular invariant partition 
functions of charge conjugation, or Cardy, type as characters of coends in 
categories that share essential features with the ones appearing
in logarithmic CFT. The coefficients of such a partition function
are given by the Cartan matrix of the theory.

\newpage

\section{Introduction}

Partition functions are among the most basic quantities of a
quantum field theory. In large classes of \twodim\ rational conformal field theories,
the (torus) partition function is quite explicitly accessible: via the principle 
of holomorphic factorization, it can be written as a bilinear combination 
  \be
  Z = \sum_{i,j} Z_{i,j}\, \chii^\V_i \,{\otimes_\complex^{}}\, \chii^\V_j
  \label{Z}
  \ee
of the (finitely many) irreducible characters of the chiral symmetry algebra 
\V\ of the CFT, with non-negative integer coefficients $Z_{i,j}$.

By the uniqueness of the vacuum, for (semisimple) rational conformal field theories
one has $Z_{0,0} \,{=}\, 1$.  A further necessary condition on $Z \,{=}\, Z(\tau)$ 
is invariance under the action of the mapping class group of the torus on the 
characters. This condition, briefly referred to as \emph{modular invariance}, is
rather restrictive. It has been the starting point of several classification
programs. An important contribution by Max Kreuzer concerns a subclass of modular
invariant partition functions: those, in which the vacuum representation with character
$\chii^\V_0$ is only paired with irreducible \V-representations of a special type,
so-called simple currents \cite{scya6}, which are invertible objects in the representation
category of \V. All modular invariants that come in infinite families turn out to be 
of this type; to classify such invariants is thus an evident problem.

It is characteristic for Max Kreuzer's approach to problems in mathematical physics
that he did \emph{not} address \emph{this} problem. This is indeed most reasonable, because
the mentioned constraints imposed on the coefficients $Z_{i,j}$ are just necessary
conditions, but are far from sufficient. In fact, there are lots of examples of
combinations \eqref Z (with $Z_{0,0} \,{=}\, 1$) that are modular invariant, but are 
unphysical in the sense that they cannot be the torus partition function of any consistent
CFT. One reason for the insufficiency of the usual constraints is that the space of
bulk states, which are counted by $Z$, is not only a module over the tensor product
$\V \,{\otimes_\complex}\, \V$ of two copies of the chiral algebra, but has also 
more subtle properties, in particular it forms an algebra under the operator product,
with an invariant bilinear form derived from the two-point functions on the sphere.
To arrive at combinations \eqref Z that can be expected to be
compatible with such requirements as well, Max and Bert Schellekens
took an approach inspired by orbifold techniques. The outcome \cite{krSc} is a
general formula for $Z$ that is both beautiful and of tremendous use in applications.
Later on \cite{fuRs9} it could be proven rigorously that all modular invariants
covered by their formula constitute physical partition functions, forming part of a 
consistent collection of correlators with any number of field insertions at all genera.

Today concise mathematical formulations of the various conditions on partition
functions in rational CFT are available. The proper formalization of the
Moore-Seiberg data associated to a chiral algebra \V\ is the structure of a modular
tensor category on the representation category \C\ of \V. The bulk state space \F\ 
is then required to be a commutative Frobenius algebra in the Deligne product \CbC,
which formalizes the physical idea of pairing left- and right-moving chiral degrees 
of freedom. If the category
\C\ is semisimple it is easy to give an example of such an algebra:
  \be
  \F = \Fo := \bigoplus_{i\in\IJ}\, S_i^\vee \boxtimes S_i^{} \,,
  \label{FC}
  \ee
with $(S_i)_{i\in\IJ}$ representatives for the finitely many isomorphism classes
of simple objects of \C, i.e.\ of irreducible representations of the chiral 
algebra. The corresponding partition function $Z \,{=}\, \Zo$, called
the charge conjugation modular invariant, has coefficients
$Z_{i,j} \,{=}\, \delta_{i,\overline j}$. When compatible conformally invariant
boundary conditions of the CFT are considered, this solution
is also referred to as the `Cardy case'.

\medskip

What is relevant for capturing the Moore-Seiberg data is the category \C\
as an abstract category (with much additional structure), not its concrete 
realization as the \rep\ category of a chiral algebra \V. It is an old idea 
to rephrase these data by regarding the abstract category \C\ as the category 
\HMod\ of modules over a `quantum group', i.e.\ over a Hopf algebra
$H$ with an $R$-matrix. In this case the Deligne product \CbC\ can be realized 
by the category \HBimod\ of $H$-bimodules (alternatively, owing to modularity, 
as the equivalent category of Yetter-Drinfeld modules over $H$). The algebraic 
structure decribing the bulk algebra \F\ is then the dual of the Hopf algebra 
$H$ (with $H$ seen as a bimodule over itself). In the present note the 
categories \C\ and \CbC\ are treated in this spirit, i.e.\ are realized as 
categories of modules and bimodules over a quasitriangular Hopf algebra.

The novelty in our discussion is that we do \emph{not} require the category \C\ 
to be semisimple and are thereby transcending the Moore-Seiberg framework. 
Semisimplicity, meaning that any representation can be fully decomposed
into a finite direct sum of irreducible representations, arises in quantum physics
as a consequence of unitarity.
Still our motivation is entirely physical: the categories we are working
with are closely related to categories arising in logarithmic conformal field
theories, with applications ranging from condensed matter physics to string theory
(for a guide to the literature, see \cite[Sect.\,2]{fgst}). 
We summarize our main findings:

      \def\leftmargini{1.31em}~\\[-2.65em] \begin{itemize} 
\item
Even in the absence of semisimplicity there is a bulk algebra $\F \,{=}\, \Fo$ 
that generalizes \eqref{FC}. In particular it is modular invariant in the 
appropriate manner.

\item
The partition function for this bulk algebra \F\ can still be expressed as a
bilinear combination of the characters of simple objects of \C. Moreover, the
matrix $\Zo \,{=}\, (Z_{i,j})$ turns out to be a natural quantity associated with
the category \C: it is the \emph{Cartan matrix} of \C, which describes\,%
 \footnote{~Concerning the notion of Cartan matrix of an associative algebra,
 or of its abelian category of modules, see e.g.\ \cite[Ch.\,1.7]{BEns1} for a
 textbook reference, as well as \cite{loren}.}
how projective objects decompose into simple objects.
 \\[2pt]
Put very briefly:
 \\
The symbol $C$ not only stands for
\underline{\sl C}\hspace{1.8pt}harge conjugation and
\underline{\sl C}\hspace{1.3pt}ardy, but also for
\underline{\sl C}\hspace{1.3pt}artan.

\end{itemize}

\noindent
It is worth stressing that characters and partition functions, which count
states, do not distinguish between direct sums and non-trivial extensions
of representations. Thus when the underlying category is non-semisimple
they carry much less physical information than in the semisimple case.
It is thus crucial that we do not just obtain the partition function $Z$,
but even the bulk algebra $F$ that has $Z$ as its character.


\section{Summary of concepts and results}

In this section we formulate, in Theorems \ref{thm1} and \ref{thm2} below,
our main results and collect the relevant background information that is 
needed to appreciate them. The proofs, as well as a more detailed description
of various pertinent concepts, will be given in Section \ref{sec:details}.


\subsection{Factorizable Hopf algebras}

As already mentioned, we assume that we can realize the category \C, which for 
rational CFT is the modular tensor category that formalizes the Moore-Seiberg 
data, as the category \HMod\ of left modules over a Hopf algebra $H$. More 
precisely, $H$ comes endowed with additional structure, as stated in the 
following convention; for brevity 
we refer to such algebras as \emph{factorizable Hopf algebras}:
  
\begin{convention}\label{conv1}
Throughout this note, a \emph{factorizable Hopf algebra} is a \findim\
factorizable ribbon Hopf algebra over an algebraically closed field \ko\ of
characteristic zero.
\end{convention}

In the CFT setting, \ko\ is the field of complex numbers.
Let us summarize the meaning of the qualifications imposed on the \ko-vector
space $H$: That $H$ is a \emph{Hopf algebra} means that it is endowed with a
product $m$, unit $\eta$, coproduct $\Delta$, counit $\eps$ and antipode $\apo$,
such that $(H,m,\eta)$ is a unital associative algebra and
$(H,\Delta,\eps)$ is a counital coassociative coalgebra, with the coproduct
being an algebra morphism from $H$ to $H\oti H$, and with the antipode satisfying
$m\,{\circ}\,(\id_H\oti\apo)\,{\circ}\,\Delta \,{=}\, \eta\,{\circ}\,\eps 
\,{=}\, m\,{\circ}\,(\apo\oti\id_H)\,{\circ}\,\Delta$\,.
A \emph{quasitriangular} Hopf algebra is a Hopf algebra $H$ endowed with an
invertible element $R \,{\in}\, H\otik H$, called the \emph{R-matrix}, that 
intertwines the coproduct $\Delta$ and the opposite coproduct 
$\Delta^{\!\rm op} \,{=}\, \tau_{H,H}\,{\circ}\, \Delta$ and satisfies
 \footnote{~We identify $H$ with the space $\Homk(\ko,H)$ of linear maps from
 \ko\ to $H$.}
  \be
  (\Delta \oti \id_H) \circ R = R_{13}\cdot R_{23} \qquad{\rm and}\qquad
  (\id_H \oti \Delta) \circ R = R_{13}\cdot R_{12} \,.
  \ee
A \emph{ribbon} Hopf algebra is a quasitriangular Hopf algebra $H$ endowed with
a central invertible element $v\,{\in}\, H$, called the \emph{ribbon element}, that 
obeys
   \be
   \apo \circ v = v \,, \qquad \eps \circ v = 1 \qquad{\rm and}\qquad
   \Delta \circ v = (v\oti v) \cdot Q^{-1} \,,
   \ee
where $Q \,{\in}\, H\otik H$ is the \emph{monodromy matrix} $Q \,{=}\, R_{21}
\,{\cdot}\, R \,{\equiv}\, (\tau_{H,H}\,{\circ}\, R)\,{\cdot}\, R$.
(In the CFT context, the R-matrix contains information about the braiding,
while the eigenvalues of the action of the ribbon element on a module give
the exponentiated conformal weights.)
 \\
A \emph{factorizable} Hopf algebra is a quasitriangular Hopf algebra $H$ whose
monodromy matrix can be written as $Q \,{=}\, \sum_\ell h_\ell \oti k_\ell$
with $\{h_\ell\}$ and $\{k_\ell\}$ two vector space bases of $H$.

The Hopf algebras that are presently thought to be of relevance for classes of
logarithmic conformal field theories do not fully fit into our framework, but are
very close. They do not have an $R$-matrix, but still a factorizable monodromy
matrix (see e.g.\ \cite{fgst,naTs2}) or live in a more general category than 
the one of \findim\ \ko-vector spaces \cite{seTi4}.


\subsection{Modules and bimodules over factorizable Hopf algebras}

We denote by \HMod\ the category of left $H$-modules and by \HBimod\ the one of
$H$-bimodules. Both of them are finite tensor categories in the sense of \cite{etos},
and they have a ribbon structure, i.e.\ there are families of duality, braiding
and twist morphisms satisfying the usual axioms. In particular, 
the tensor product functor is exact in both arguments, and
the set $\IJ$ of isomorphism classes of simple objects is finite.
If $H$ is semisimple, then \HMod\ and \HBimod\ are semisimple modular tensor
categories, like the \rep\ categories of chiral algebras in rational CFT.
It is worth pointing out that the tensor product of \HBimod\ is \emph{not}
the one over $H$, for which the vector space underlying a tensor product bimodule
$X \,{\otimes}_H Y$ is a non-trivial quotient of the vector space tensor product
$X \otik Y$ (and for which only the structure of $H$ as an associative algebra
is needed), but rather uses explicitly that $H$ is a \emph{bi}algebra: it
is obtained by pulling back the natural $H\oti H$-bimodule structure on
$X \oti Y$ along the coproduct to the structure of an $H$-bimodule
(for more details see \cite[Sect.\,2.2]{fuSs3}).

The \emph{Deligne tensor product} of two locally finite \ko-linear abelian
categories \C\ and \D\ is a category $\C\boti\D$ together with a bifunctor
$\boxtimes\colon \C\,{\times}\,\D \To \C\boti\D$ that is right exact and 
\ko-linear in both variables and has a universal property by which, in short,
bifunctors from $\C\,{\times}\,\D$ become functors from $\C\boti\D$.
If $\C\,{\simeq}\,A\Mod$ and $\D\,{\simeq}\,B\Mod$ are categories of left modules
over associative algebras $A$ and $B$, respectively, then their Deligne product
$\C\boti\D$ is equivalent to $(A{\otimes}B\op)\Mod$ as a \ko-linear abelian 
category. In our case, where $A \,{=}\, B \,{=}\, H$ is a factorizable Hopf 
algebra, upon an appropriate choice of braiding on the bimodule category
this in fact extends to an equivalence
  \be
  \HBimod \,\simeq\, \overline{\HMod} \,\boxtimes\, \HMod
  \label{HBimod-HModHMod}
  \ee
of ribbon categories, where $\overline{\HMod}$ is \HMod\ with opposite braiding
and twist \cite[App.\,A.3]{fuSs3}.


\subsection{The bulk Frobenius algebra}

It has been shown in \cite{fuSs3} that for $\C \,{=}\, \HMod$ a natural candidate
for the bulk state space is the \emph{coregular bimodule} \F, i.e.\ the dual
space $\Homk(H,\ko)$ of $H$ endowed with the duals of the regular left and
right actions of $H$. For semisimple $H$, this $H$-bimodule decomposes 
into simple bimodules as in \eqref{FC}.

As an object of $\CbC \,{=}\, \HBimod$, \F\ has properties characteristic for the 
bulk state space of a CFT:

\begin{theorem}\label{thm1}
{\rm (i)} The maps
  \be
  \begin{array}{ll}
  m\bico := {\Delta^{\phantom:}}^{\!\!\!\wee_{}} \,, \qquad
  \eta\bico := \eps^\wee \,, \qquad
  \eps\bico := {\Lambda^{\phantom:}}^{\!\!\!\wee} \qquad{\rm and}\qquad
  \\[2mm]
  \Delta\bico := {[ (\id_H \oti (\lambda\,{\circ}\, m)) \,{\circ}\,
  (\id_H\oti\apo\oti\id_H) \,{\circ}\, (\Delta\oti\id_H) ]}^\wee
  \end{array}
  \label{def-Hb-Frobalgebra}
  \ee
$($with $\Lambda$ and $\lambda$ the integral and cointegral of $H$, respectively$)$
endow the coregular bimodule \F\ with the structure of a Frobenius algebra
$(\F,m\bico,\eta\bico,\Delta\bico,\eps\bico)$ in the category \HBimod.
\\[2pt]
{\rm (ii)} \F\ is commutative, cocommutative and symmetric and has trivial twist.
\end{theorem}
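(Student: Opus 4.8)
The plan is to treat the algebra and coalgebra halves separately and only then glue them via the Frobenius relation, relegating the braided statements of part (ii) to the end. First I observe that $(\F,m\bico,\eta\bico) \,{=}\, (H^*,\Delta^\wee,\eps^\wee)$ is simply the convolution algebra dual to the coalgebra $(H,\Delta,\eps)$; its associativity and unitality are therefore the plain duals of the coassociativity and counit axioms of $H$ and cost nothing. The substantive input is the classical Larson--Sweedler fact that every \findim\ Hopf algebra is a Frobenius algebra, with Frobenius form a cointegral. Applied to the Hopf algebra $H^*$ this says that a cointegral of $H^*$ --- which is exactly an integral $\Lambda$ of $H$, so that $\eps\bico \,{=}\, \Lambda^\wee$ --- turns the convolution algebra into a Frobenius algebra. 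Concretely I would first verify that the pairing $\beta(\phi,\psi) \,{=}\, \langle\Lambda,\phi\,{*}\,\psi\rangle$ is nondegenerate, which is precisely the statement that a nonzero integral exists and is unique up to scalar.

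Next I would pin down the coproduct. In any Frobenius algebra the coproduct is forced by the form, being assembled from dual bases relative to $\beta$; so the real claim is that the explicit map $\Delta\bico \,{=}\, g^\wee$ with $g(a\,{\otimes}\,b) \,{=}\, \sum a_{(1)}\,\lambda(\apo(a_{(2)})\,b)$ reproduces this canonical copairing. The engine is the standard integral--cointegral identity $\sum \Lambda_{(1)}\,\lambda(\apo(\Lambda_{(2)})\,b) \,{=}\, b$, a Radford-type formula expressing that $\Lambda$ and $\lambda$ induce mutually inverse isomorphisms $H\,{\cong}\,H^*$, together with \emph{unimodularity} of $H$, which holds because factorizable Hopf algebras are unimodular. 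This identity immediately yields the counit property $(\eps\bico\,{\otimes}\,\id)\,{\circ}\,\Delta\bico \,{=}\, \id \,{=}\, (\id\,{\otimes}\,\eps\bico)\,{\circ}\,\Delta\bico$, and coassociativity together with the two Frobenius relations then follow from coassociativity of $\Delta$, the antipode axiom, and the invariance of $\lambda$. I expect this dualized computation with $\apo$ and $\lambda$ to be the most laborious step.

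It then remains to check that all four maps are morphisms in \HBimod\ for the coregular bimodule structure, i.e.\ that they intertwine the duals of the regular left and right $H$-actions. For $m\bico$ and $\eta\bico$ this is where the tensor product of \HBimod\ --- pulled back along the coproduct rather than taken over $H$ --- enters: the required compatibility is exactly the statement that $\Delta$ and $\eps$ are algebra maps. For $\eps\bico \,{=}\, \Lambda^\wee$ one uses that $\Lambda$ is a \emph{two-sided} integral, again by unimodularity, and for $\Delta\bico$ the (anti)module properties of $\apo$ together with the invariance of $\lambda$ do the job.

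Finally, part (ii) concerns properties in the braided ribbon category \HBimod, so all of it is governed by the $R$-matrix and the ribbon element, most transparently through the equivalence $\HBimod \,{\simeq}\, \CbC$. Braided commutativity $m\bico\,{\circ}\,c_{\F,\F} \,{=}\, m\bico$ reduces to an identity relating the convolution product and its opposite through the monodromy $Q$, which holds by factorizability (nondegeneracy of $Q$) together with the invariance of $\Lambda$; cocommutativity is the dual assertion for $\Delta\bico$. Symmetry amounts to comparing the two pairings built from $\eps\bico$, $m\bico$ and the braiding, and triviality of the twist $\theta_\F \,{=}\, \id_\F$ follows because the ribbon actions coming from the two tensor factors of $\CbC$ carry opposite twists and cancel, using centrality of $v$ and $\apo\,{\circ}\, v \,{=}\, v$. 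The overarching obstacle throughout is to keep the explicit formula for $\Delta\bico$ under control while simultaneously respecting the bimodule structure and, in part (ii), the braiding --- that is, to juggle the antipode, integral, cointegral and $R$-matrix within a single computation.
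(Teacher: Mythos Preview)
The paper does not actually prove Theorem~\ref{thm1}; it only records that the proof is carried out in \cite{fuSs3} (Propositions 2.10 and 3.1, Theorem 4.4, Remark 4.9). So there is no in-paper argument to compare your sketch against. Your outline for part (i) is, in its overall shape, a sound reconstruction of what such a proof must contain, and the structural ingredients you invoke --- the Larson--Sweedler Frobenius structure on a \findim\ Hopf algebra, the Radford-type inversion formula linking $\Lambda$ and $\lambda$, and unimodularity of factorizable $H$ --- are precisely the standard tools and almost certainly those used in the cited reference.

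Two points in your treatment of part (ii) are off, however. Braided commutativity $m\bico\,{\circ}\,c_{\F,\F} \,{=}\, m\bico$ concerns only $m\bico \,{=}\, \Delta^\wee$, so the integral $\Lambda$ plays no role there; and the relevant input is the quasitriangularity relation $R\,\Delta \,{=}\, \Delta^{\rm op} R$ together with the specific form of the braiding on $H$-bimodules (which combines an $R$-twist coming from the left action with an inverse $R$-twist from the right), not the nondegeneracy of the monodromy $Q$. Factorizability is used elsewhere in the paper --- for the equivalence with the Drinfeld center and for modular invariance --- but it is not the mechanism behind commutativity of \F. Your heuristic for trivial twist, that the contributions from the two chiral halves of $\CbC$ cancel, is correct in spirit, but the verification does not reduce to $\apo\,{\circ}\,v \,{=}\, v$ alone; one must trace how the ribbon element acts through the \emph{coregular} left and right actions on $H^*$, and the cancellation uses centrality of $v$ in a more specific way than your sentence suggests.
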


\noindent
For the proof, see Propositions 2.10 and 3.1, Theorem 4.4 and Remark 4.9 in
\cite{fuSs3}.  We call \F\ the \emph{bulk Frobenius algebra}.
If $H$ is semisimple, \F\ has the structure of a Lagrangian algebra 
in the sense of \cite[Def.\,4.6]{dmno}.

Owing to factorizability,
besides the equivalence \eqref{HBimod-HModHMod} there is also an equivalence
of ribbon categories between \HBimod\ and the Drinfeld center of \HMod, and
thus between \HBimod\ and the category of Yetter-Drinfeld modules over $H$.
Hence instead of working with $H$-bimodules we could equivalently work with
Yetter-Drinfeld modules over $H$. In that setting, the algebra \F\ arises as 
the so-called \cite{ffrs,davy20} \emph{full center} of the tensor unit of \HMod.


\subsection{The partition function}

Via the principle of holomorphic factorization, correlation functions in
full CFT are, at least for rational CFTs, elements in spaces of conformal
blocks of the associated chiral CFT. They must be invariant under actions 
of mapping class groups and obey sewing constraints. It is an obvious 
question whether solutions satisfying these conditions still exist when the 
theory is no longer rational so that the category \C\ is non-semisimple. In 
fact, non-trivial solutions to locality and crossing 
symmetry constraints on the sphere with a modular invariant spectrum have 
been found in \cite{gaKa3} (compare also \cite{garW2}).

In our setting, in which conformal blocks are specific morphism spaces 
$\HomH(-,-)$ in the category \HMod, while correlation
functions are elements of morphism spaces $\HomHH(-,-)$ in \HBimod, 
we are able to answer this question in the affirmative for any factorizable
Hopf algebra $H$, for the particular case of the torus partition function,
i.e.\ the zero-point correlator on the torus. The corresponding space of 
zero-point conformal blocks on the torus
turns out to be $\HomH(\H,\one)$, where $\one$ is the 
tensor unit of \HMod\ (given by the field \ko\ endowed with the trivial 
left $H$-action $\eps$) and $\H\,{\in}\,\HMod$ is a certain Hopf algebra 
internal to \HMod; \H, which is called the \chirhh, will be described 
in detail in Section \ref{sec:chirhh}. Similarly, the torus partition 
function itself is the character
  \be
  Z = \chii^\K_\F ~\in \HomHH(\K,\one)
  \label{Z=chiKF}
  \ee
of the bulk Frobenius algebra \F\ with respect to a  Hopf algebra \K\ internal
to \HBimod. Here $\one$ is now the tensor unit of \HBimod\ (again the field 
\ko, now endowed with trivial left and right $H$-actions); the \emph{\bulkhh} 
\K\ will be described in detail in Section \ref{sec:bulkhh}.
It has been shown in \cite{fuSs3} that the morphism
$\chii^\K_\F$ is modular invariant, with respect to the natural action of the 
modular group that comes from the action \cite{lyub6} of the modular group 
on the space $\HomH(\H,\one)$ of conformal blocks.

As we will see, \K\ can be canonically identified with $\H \,{\otimes}\, \H$.
Holomorphic factorization thus amounts to identifying $\chii^\K_\F$ with a
bilinear expression of basis elements of $\HomH(\H,\one)$. We can show that
this is indeed the case and, moreover, recognize the resulting coefficients
as natural quantities for the category \HMod:

\begin{theorem}\label{thm2}
The partition function {\rm \eqref{Z=chiKF}} can be chirally decomposed as
  \be
  Z = \sum_{i,j\in\IJ} c_{\overline i,j}^{}\, \chii^\H_i \otimes \chii^\H_j \,,
  \label{eq:thm2}
  \ee
where $\{\chii^\H_i \,|\, i\,{\in}\,\IJ\}$ are characters of \H-modules,
$\overline i \,{\in}\, \IJ$ is the label dual to $i$,
and $\,C \,{=}\, \big(c_{i,j}\big){}_{i,j\in\IJ}^{}$ is the Cartan matrix of \HMod.
\end{theorem}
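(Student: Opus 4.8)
The plan is to reduce the identity \eqref{eq:thm2} to a statement in the Grothendieck ring, where the insensitivity of characters to extensions (stressed in the introduction) can be exploited, and then to recognize the resulting multiplicities as Cartan invariants.

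First I would observe that the internal character $\chii^\K_{(-)}$ is \emph{additive on short exact sequences} of \K-modules: it is the quantum trace built from the \K-action together with the duality morphisms of the ribbon category \HBimod, and the quantum trace of a filtered module equals the sum of the quantum traces of its subquotients. Hence $Z\,{=}\,\chii^\K_\F$ depends only on the class $[\F]\,{\in}\,K_0(\HBimod)$. Since \ko\ is algebraically closed and $\HBimod\simeq\overline{\HMod}\boti\HMod$ by \eqref{HBimod-HModHMod}, one has $K_0(\HBimod)\cong K_0(\HMod)\otimes_{\mathbbm Z}K_0(\HMod)$ with the classes $[S_i]\boti[S_j]$, $i,j\,{\in}\,\IJ$, as a $\mathbbm Z$-basis.

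Second, using the canonical identification $\K\cong\H\oti\H$ recorded before the statement, together with the multiplicativity of the quantum trace under external tensor products over a tensor-product Hopf algebra, I would establish the factorization $\chii^\K_{U\boti V}=\chii^\H_U\oti\chii^\H_V$ for $U\,{\in}\,\overline{\HMod}$ and $V\,{\in}\,\HMod$. Combined with the first step this yields at once
\begin{equation}
Z \;=\; \sum_{i,j\in\IJ} [\F:S_i\boti S_j]\; \chii^\H_i\oti\chii^\H_j \,,
\end{equation}
so that the entire content of \eqref{eq:thm2} is the identification of the composition multiplicity $[\F:S_i\boti S_j]$ with the Cartan number $c_{\overline i,j}$.

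Third --- and this is the main obstacle --- I would compute these multiplicities. Because external tensor products of projective covers are projective, $[\F:S_i\boti S_j]=\dim\HomHH(P_i\boti P_j,\F)$, with $P_i$ the projective cover of $S_i$. Writing $\F=\Homk(H,\ko)$ as the coregular bimodule (cf.\ \eqref{def-Hb-Frobalgebra}) and applying the tensor--hom adjunction, one rewrites this Hom space as the linear dual of a relative tensor product of the two projectives over $H$, i.e.\ as a space of the form $e_j\,H\,e_i$ for primitive idempotents; its dimension is $\dim\HomH(P_a,P_b)=[P_b:S_a]$, a Cartan entry. The delicate point is the exact placement of the dual label: one bar is produced by passing from the regular bimodule $H$ to its dual $\F=H^*$, and a further bar enters through the antipode appearing in the equivalence \eqref{HBimod-HModHMod} that relates right $H$-actions to objects of \HMod. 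Getting these to combine into precisely $c_{\overline i,j}$ (rather than a transpose or a differently dualized entry) relies on the symmetry $c_{i,j}=c_{\overline j,\overline i}$ of the Cartan matrix, which holds because $H$ is a \emph{symmetric} Frobenius algebra: $H$ is unimodular by factorizability and $\apo^2$ is inner (conjugation by the Drinfeld element), whence $P_i^{\,\vee}\cong P_{\overline i}$. I would fix all conventions by requiring that the semisimple specialization --- where $C=\mathbbm 1$ and \eqref{FC} holds --- reproduce the charge-conjugation values $Z_{i,j}=\delta_{i,\overline j}$; this both determines the bar unambiguously and serves as a consistency check on the general formula.
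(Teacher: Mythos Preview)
Your overall route is a legitimate alternative to the paper's, but the two arguments are organized quite differently. The paper does \emph{not} pass through the Grothendieck group or invoke an abstract factorization $\chii^\K_{U\boxtimes V}=\chii^\H_U\oti\chii^\H_V$. Instead it relates the internal characters $\chii^\H$ and $\chii^\K$ to \emph{ordinary} characters of the \ko-Hopf algebras $H$ and $H\oti H\op$ by explicit formulas involving the Drinfeld map $f_Q$ (Lemma~\ref{lem:chiiH=drinchiiH} and its bulk analogue), and then applies the classical identity $\chii^{H\otimes H\op}_F=\sum_{i,j}c_{i,j}\,\chii^H_i\oti\chii^H_j$ for self-injective algebras. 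The dual label emerges cleanly: the bulk formula \eqref{chiKX-fQi-fQ} carries $f_{Q^{-1}}$ on the first tensor factor (reflecting the opposite braiding in $\overline{\HMod}$), and $\chii^H_i\circ m\circ(t\oti f_{Q^{-1}})=\chii^\H_{\overline i}$. A final relabelling $i\mapsto\overline i$ gives \eqref{eq:thm2}. This buys precise control of the bar at the cost of an explicit vector-space computation; your route is more categorical but must do that bookkeeping by hand.

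That bookkeeping is where your proposal has a real gap. The claim that ``one bar is produced by passing from the regular bimodule $H$ to its dual $\F=H^*$'' is incorrect: the Frobenius map furnishes an isomorphism $\F\cong H$ of $H$-\emph{bimodules}, so the composition multiplicities of $\F$ and of $H$ in any simple bimodule coincide, with no dual twist. The bar arises entirely on the character side, from the inverse monodromy in the $\overline{\HMod}$-factor; your step~2 silently drops it when you pass from ``$\chii^\H_U$ for $U\in\overline{\HMod}$'' to a sum with both factors $\chii^\H_i$ on the same footing. And ``fixing conventions by the semisimple specialization'' is not a proof step: it can confirm a formula once derived, but cannot determine which of $c_{\overline i,j}$, $c_{i,\overline j}$, $c_{j,i}$, etc.\ is correct. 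If you track the bar honestly through your factorization (the $\overline{\HMod}$-character of $S_i$ equals the $\HMod$-character $\chii^\H_{\overline i}$) and then use $[\F:S_i\otik T_j]=c_{i,j}$ directly, no appeal to the symmetry $c_{i,j}=c_{\overline j,\overline i}$ is needed.
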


\begin{remark}
(i)
The entries $c_{i,j}$ of the Cartan matrix are non-negative integers. In general,
$c_{0,0}$ is larger than 1, but
this is \emph{not} in contradiction with the uniqueness of the vacuum.
\\[2pt]
(ii) 
Unless $H$ is semisimple, the space $\HomH(\H,\one)$ is \emph{not} spanned by
characters alone. A complement is provided by so-called pseudo-characters 
(compare e.g.\ \cite{miya8,gaTi,arNa}). It is thus a non-trivial statement
that a decomposition of the form \eqref{eq:thm2} exists, irrespective of the 
precise values of the coefficients.
\\[2pt]
(iii) 
The result fits with predictions for the bulk state space of
certain logarithmic CFTs, the $(1,p)$ triplet models and WZW models with 
supergroup target spaces \cite{qusc4,garu2}.
\end{remark}

The decomposition \eqref{eq:thm2} is the main new result of this note.
It would be difficult to establish this relation directly, as it is hard
to describe the characters for the chiral and \bulkhh s sufficiently explicitly.
Instead, our idea of proof is to relate these characters to characters for
modules and bimodules over the underlying ordinary Hopf algebra $H$ and then
invoke classical results for the latter. 
In some more detail, we will proceed as follows.

\def\leftmargini{1.31em}~\\[-2.65em]\begin{enumerate}\addtolength{\itemsep}{-4pt}%
\item
Using general results about \findim\ associative algebras and their \rep s we
deduce the formula \eqref{A-bimod-char-2} for the character of a self-injective
algebra $A$ as a bimodule over itself.
\item
Using the fact that the bimodule structures of the regular and coregular bimodules 
$H$ and \F\ are intertwined by the Frobenius map, the character 
formula \eqref{A-bimod-char} is translated to the analogous formula 
\eqref{F-H-bimod-char} for the $H$-bimodule \F.
\item
We observe (Lemma \ref{Lemma:rhoHV}) that
the \chirhh\ \H\ acts via partial monodromy on any object of \HMod.
Based on this result we can show that
\H-characters are obtained from $H$-characters by composing them
with the Drinfeld map, see formula \eqref{chiiH=drinchiiH}.
\item
We obtain a similar expression \eqref{chiKX-fQi-fQ} of \K-characters in terms of
characters for the Hopf algebra $H\oti H\op$.
\item
We show (Proposition \ref{prop:chiKX}) that \K-characters can be written
as bilinear combinations of \H-characters.
When applied to the character of \F\ as a \K-module, together with the
previous results this yields the decomposition \eqref{eq:thm2}.
\end{enumerate}


\section{Details}\label{sec:details}

\subsection{Associative algebras and characters}

For $A \,{=}\, (A,m,\eta)$ a \findim\ (unital, associative) algebra over the 
field \ko, the \emph{character} $\chii_M^A$ of a left $A$-module $M \,{=}\, 
(M,\rho)$ is, by definition, the partial trace of the \rep\ morphism $\rho$. 
Here the trace is taken in the sense of linear maps, i.e.\ in the category of 
\findim\ \ko-vector spaces. Thus
  \be
  \chii_M^A := \mathrm{tr}_M(\rho) = \tilde d_M \circ
  (\rho \oti \id_{M^\vee}) \circ (\idA\oti b_M) ~\in \Hom(A,\ko) \,,
  \ee
where $b_M \,{\in}\, \Homk(\ko,M \otik M^*)$ is the (right) coevaluation 
and $\tilde d_M \,{\in}\, \Homk(M \otik M^*,\ko)$ the (left) evaluation. Now 
for \findim\ \ko-vector spaces the left and right dualities coincide, in the 
sense that $\tilde d_M$ can be expressed through the right evaluation $d_M
\,{\in}\, \Homk(M^* \otik M,\ko)$ as
  \be
  \tilde d_M = d_M \circ \tau_{M,M^*_{}}
  \ee
with $\tau$ the flip map, and analogously for the two coevaluations.

In the sequel we will make use of the graphical calculus for strict\,%
 \footnote{~That the tensor product of the categories of our interest is
 strictly associative can -- just like in many other situations
 in which associativity does not, a priori, hold on the nose -- be assumed
 by invoking the Coherence Theorem.
 }
ribbon categories.
In this pictorial notation, the two descriptions of the character are
  \eqpic{def_char} {210} {27} { \put(0,3){
  \put(0,36)    {$ \chii_M^A ~= $}
  \put(45,3){ \begin{picture}(0,0)(0,0)
      \scalebox{.38}{\includegraphics{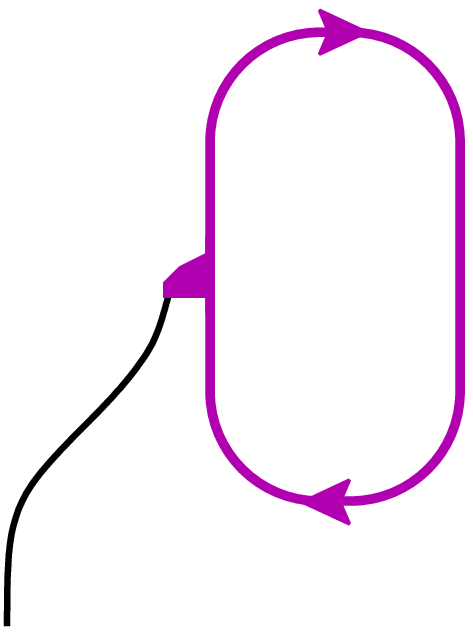}}\end{picture}
  \put(-3.4,-9) {\scriptsize$ A $}
  \put(14.6,59) {\scriptsize$ M $}
  }
  \put(125,36)  {$ = $}
  \put(150,3){ \begin{picture}(0,0)(0,0)
      \scalebox{.38}{\includegraphics{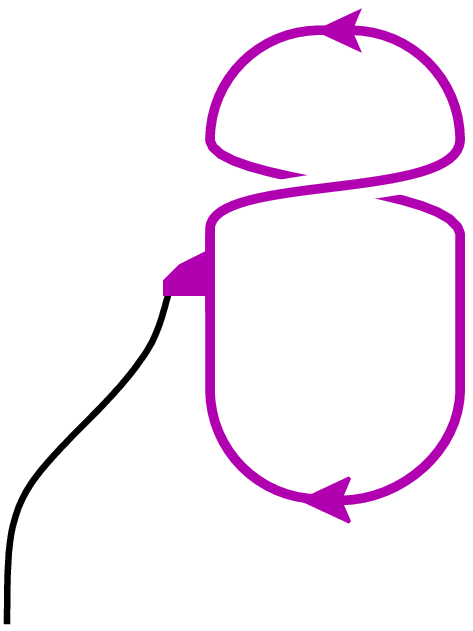}}\end{picture}
  \put(-3.4,-9) {\scriptsize$ A $}
  \put(51,60)   {\scriptsize$ M $}
  } } }
Characters are class functions, i.e.\ satisfy 
$\chii_M^A \,{\circ}\, m \,{=}\, \chii_M^A \,{\circ}\, m\op$;
$A$ is semisimple iff the space of class functions is already exhausted by linear
combinations of characters of $A$-modules.

Characters are additive under short exact sequences, i.e.\ for any short 
exact sequence $0 \To U \To W \To V \To 0$ of $A$-modules one has 
\cite[Sect.\,1.5]{loren} $\chii_W^A \,{=}\, \chii_U^A + \chii_V^A$. It 
follows that
  \be
  \chii_V^A = \sum_{i\in\IJ}\, [\,V \,{:}\, S_i\,] \, \chii_i^A \,,
  \label{chiAV}
  \ee
where $ \{ S_i \,|\, i\,{\in}\,\IJ \} $ is a full set of representatives of the 
isomorphism
classes of simple $A$-modules, $\chii_i^A \,{\equiv}\, \chii_{S_i^{}}^A$ is the
character of $S_i$ and $[\,V\,{:}\, S_i\,]$ is the multiplicity of $S_i$ in the
Jordan-H\"older series of $V$.
The simple $A$-modules $S_i$ are given by $P_i/J(A)\,P_i$, where $J(A)$ is the
Jacobson radical of $A$ and $P_i$ is the projective cover of $S_i$. The projective
modules $P_i$, in turn, form a full set of representatives of the isomorphism
classes of indecomposable projective $A$-modules and satisfy $P_i \,{=}\, A\,e_i$ 
with $ \{ e_i \,{\in}\, A \,|\, i\,{\in}\,\IJ \} $
a collection of primitive orthogonal idempotents.
With the same idempotents $e_i$, $Q_i \,{:=}\, e_i\,A$ are representatives  
for the isomorphism classes of indecomposable projective right $A$-modules.

The character of the projective module $P_i$ decomposes as
  \be
  \chii_{P_i}^A = \sum_{j\in\IJ} c_{i,j}^{} \, \chii_j^A
  \label{chiAPi}
  \ee
with coefficients
  \be
  c_{i,j} := [\,P_i \,{:}\, S_j \,] = \dimk(\HomA(P_i,P_j))
  \label{cij}
  \ee
for $i,j\,{\in}\,\IJ$. The matrix $C \,{=}\, \big(\,c_{i,j}\,\big)$, which only 
depends on the category $A\Mod$ as an abelian category, is called the
\emph{Cartan matrix} of $A$, or of the category $A\Mod$.

\smallskip

As a left module over itself, $A$ is projective and decomposes into 
indecomposable projective modules $P_i$ according to \cite[Satz\,G.10]{JAsc}
  \be
  _AA \,\cong\, \bigoplus_{i\in\IJ}\, P_i \otik \ko^{\dim(S_i)}_{} .
  \label{AA=}
  \ee
An analogous decomposition is valid for $A$ as a right module over itself.
The structure of $A$ as a bimodule over itself (with regular left and right 
actions) is, in general, much more complicated. Now the structure of 
an $A$-bimodule is equivalent to the one of a left $A{\otimes}A\op$-module;
accordingly, we define the character of $A$ as a bimodule over itself 
as the character of $A$ as an $A{\otimes}A\op$-module.

For any two \findim\ \ko-algebras $A$ and $B$, the Jacobson radical of
the tensor product algebra $A \oti B$ (i.e.\ the vector space $A \otik B$, 
endowed with unit map $\eta_A \oti \eta_B$ and product
$(m_A \oti m_B)\,{\circ}\, (\id_A \oti \tau_{A,B}\oti \id_B)$) satisfies 
$J(A{\otimes}B) \,{=}\, J(A) \otik B \,{+}\, A \otik J(B)$. Using that \ko\
is a field of characteristic zero, it follows that
a complete set of primitive orthogonal idempotents of $A \oti B$ is given by 
$\{e_i^A \otik e_j^B \,|\, i\,{\in}\,\IJ_A\,,\, j\,{\in}\,\IJ_B\}$, and complete 
sets of indecomposable projective and of simple $A{\otimes}B$-modules are
given by $\{P_i^A \otik P_j^B \,|\, i\,{\in}\,\IJ_A\,,\, j\,{\in}\,\IJ_B\}$
and by $\{S_i^A \otik S_j^B \,|\, i\,{\in}\,\IJ_A\,,\, j\,{\in}\,\IJ_B\}$, 
respectively (see e.g.\ \cite[Thm.\,(10.38)]{CUre1}).
Comparing with \eqref{chiAV}, it follows that the character of any $A\oti B$-
module $X$ can be written as a bilinear combination
  \be
  \chii_X^{A\otimes B}
  = \sum_{i\in\IJ_A,j\in\IJ_B} n_{i,j} \, \chii_i^A \oti \chii_j^B \,,
  \label{chiABX}
  \ee
where $\chii_k^A$ are the characters of simple left $A$-modules $S_k$, as 
above, and $\chii_l^B$ those of the simple left $B$-modules, and $n_{k,l}$ 
are non-negative integers.

Specializing to $B \,{=}\, A\op$ and $X \,{=}\, A$ (with regular actions), 
we can use that
  \be 
  \Hom_{\!A\otimes A\op_{}}\big( (A\oti A\op)\,(e_i\oti e_j) , A \big)
  \cong e_i\,A\,e_j \cong \HomA(Ae_i,Ae_j) = \HomA(P_i,P_j) \,,
  \ee
which implies that the entries \eqref{cij} of the Cartan matrix obey
  \be
  c_{i,j}
  = \dimk\big( \Hom_{\!A\otimes A\op_{}}( (A\Oti A\op)(e_i\Oti e_j),A) \big)
  = [\, A \,{:}\, S_i\otik T_j\,]
  \ee
with $T_k$ the simple quotients of the projective right $A$-modules $e_k\,A$. 
Using also that $\chii_M^{A\op_{}} {=}\, \chi^A_{M^*_{\phantom|}}$,
formula \eqref{chiAV} yields
  \be \begin{array}{ll}
  \chii_A^{A\otimes A\op_{}} \!\!&\displaystyle = \sum_{i,j\in\IJ}
  [\,A \,{:}\,S_i\otik T_j\,] \, \chii_{S_i\otimes_\ko^{}T_j}^{A\otimes A\op_{}}
  \\[-1.1em]\\ &\displaystyle
  = \sum_{i,j\in\IJ} c_{i,j} \, \chii_{S_i\otimes_\ko^{}T_j}^{A\otimes A\op_{}}
  = \sum_{i,j\in\IJ} c_{i,j} \, \chii_i^A \oti \chii_{T_j^{}}^{A\op_{}}
  = \sum_{i,j\in\IJ} c_{i,j} \, \chii_i^A \oti \chii_{S_j^*}^A
  \end{array} \label{A-bimod-char}
  \ee
as a linear map in $\Homk(A\oti A\op,\ko)$. 

Moreover, if $A$ is self-injective, then one has $T_k^{} \,{\cong}\, S_k^*$ as
right $A$-modules, so that \eqref{A-bimod-char} can be rewritten as
  \be
  \chii_A^{A\otimes A\op_{}}
  = \sum_{i,j\in\IJ} c_{i,j} \, \chii_i^A \oti \chii_j^A \,.
  \label{A-bimod-char-2}
  \ee


\subsection{Factorizable Hopf algebras}

Consider now the special case that $A \,{=}\, H$ is a \emph{factorizable Hopf} 
algebra in the sense of Convention \ref{conv1}, with coproduct $\Delta$, 
counit $\eps$ and antipode \apo. Then $H$ is in particular self-injective.
Thus by \eqref{A-bimod-char-2} the character of the \emph{regular $H$-bimodule},
i.e.\ the vector space $H$ together with the regular left and right actions,
is given by
  \be
  \chii_H^{H\otimes H\op}
  = \sum_{i,j\in\IJ} c_{i,j}^{} \, \chii_i^H \oti \chii_j^H \,.
  \label{H-bimod-char}
  \ee

Since the Hopf algebra $H$ is \findim, its antipode map $\apo$ is invertible, 
and there are one-dimensional spaces of left integrals $\Lambda \,{\in}\, H$ 
and of right cointegrals $\lambda \,{\in}\, \Hs$ \cite{laSw}. The composition 
$\lambda\,{\circ}\,\Lambda \,{\in}\, \ko$ is invertible (unless $\lambda$ or 
$\Lambda$ is zero), and we can and will choose the integral and cointegral
such that $\lambda\,{\circ}\,\Lambda \,{=}\, 1$. A factorizable Hopf algebra 
is unimodular \cite[Prop.\,3(c)]{radf13}, meaning that the left integral
$\Lambda$ is also a right integral, and this implies that 
$\apo\,{\circ}\,\Lambda \,{=}\, \Lambda$.

Next consider the \emph{coregular $H$-bimodule} \F, i.e.\ the vector space
$\Hs \,{=}\, \Homk(H,\ko)$ dual to $H$
endowed with the dual of the regular left and right actions; graphically,
   \eqpic{rhoHb,ohrHb} {140} {40} {
   \put(0,0)   {\begin{picture}(0,0)(0,0)
       \scalebox{.38}{\includegraphics{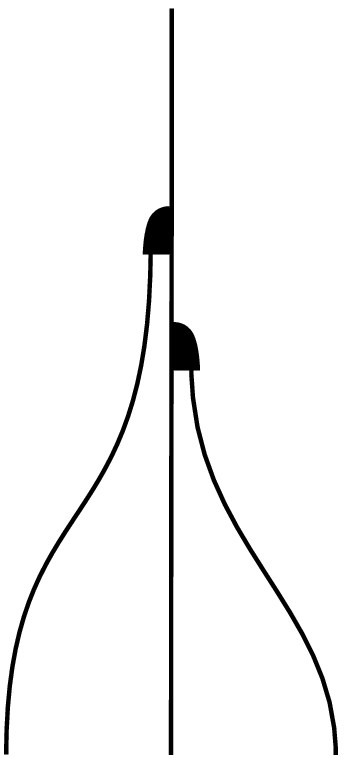}}\end{picture}
   \put(-4,-8.8)    {\scriptsize$ H $}
   \put(5,56)       {\scriptsize$ {\rho\bbico^{}} $}
   \put(13,-8.8)    {\scriptsize$ \Hss $}
   \put(14.3,84.9)  {\scriptsize$ \Hss $}
   \put(23,43)      {\scriptsize$ \ohr\bico $}
   \put(32,-8.8)    {\scriptsize$ H $}
   }
   \put(58,38)      {$ =$}
   \put(88,0) { \begin{picture}(0,0)(0,0)
       \scalebox{.38}{\includegraphics{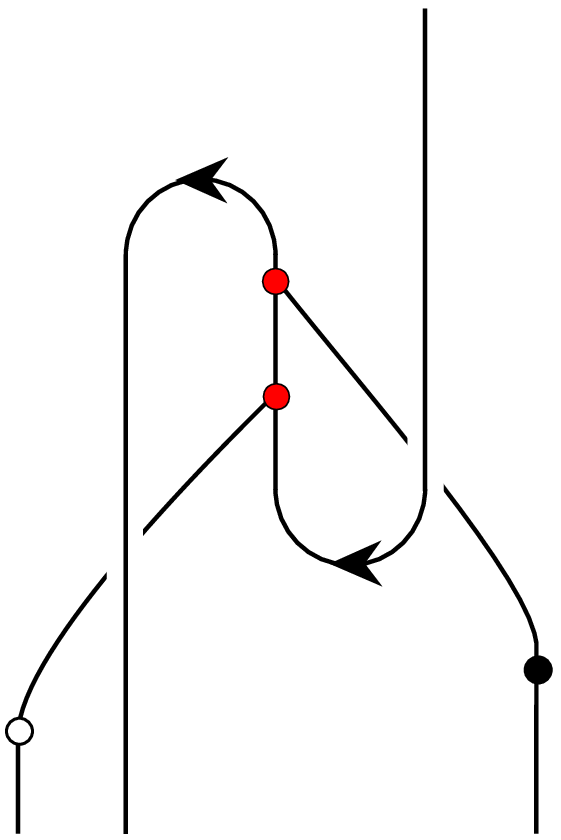}}\end{picture}
   \put(-5,9)       {\scriptsize$ \apo $}
   \put(-3,-8.8)    {\scriptsize$ H $}
   \put(10,-8.8)    {\scriptsize$ \Hss $}
   \put(42.3,93.3)  {\scriptsize$ \Hss $}
   \put(54,-8.8)    {\scriptsize$ H $}
   \put(61.6,16.2)  {\scriptsize$ \apoi $}
   } }
The bimodule \F\ is isomorphic as a bimodule to the regular bimodule.
Indeed, for any linear map $\mu \,{\in}\, \Hs$ the map
  \be
  \Phi_\mu := ((\mu\,{\circ}\, m) \oti \idHs) \circ (\apo \oti b_H)
  \ee
intertwines the regular and coregular left $H$-actions; if $\mu$ satisfies
$\mu \,{\circ}\, m \,{=}\, \mu \,{\circ}\, m\,{\circ}\,\tau_{H,H}\,
{\circ}\,(\idH\oti\apo^2)$
then $\Phi_\mu$ intertwines the regular and coregular right 
$H$-actions as well and is thus a morphism of $H$-bimodules. In particular we can 
take $\mu \,{=}\, \lambda$ to be the cointegral, in which case $\Psi \,{=}\, 
\Phi_\lambda$ is the \emph{Frobenius map} (see e.g.\ \cite{coWe6}); since 
the Frobenius map of a factorizable Hopf algebra is invertible (see e.g.\ 
\cite[App.\,A.2]{fuSc17}), 
it follows that indeed $H$ and \F\ are isomorphic as $H$-bimodules.

It follows that the decomposition \eqref{H-bimod-char} applies to the
coregular bimodule \F\ as well, i.e.
  \be
  \chii_\F^{H\otimes H\op}
  = \sum_{i,j\in\IJ} c_{i,j}^{} \, \chii_i^H \oti \chii_j^H \,.
  \label{F-H-bimod-char}
  \ee


\subsection{The \chirhh}\label{sec:chirhh}

To proceed, we introduce a certain Hopf algebra internal to the
category \HMod, the \emph{\chirhh} \H.
To this end we need a few notions from category theory. It is convenient to
formulate them first for \ko-linear abelian ribbon categories \C, and only 
later on specialize to the case that \C\ is the category $\HMod$ of left 
$H$-modules, which for any factorizable Hopf algebra $H$ belongs to this 
class of categories.

A \emph{dinatural transformation} $F\,{\Rightarrow}\,B$ from a functor
$F\colon\, \CopC\To\D$ to an object $B\,{\in}\,\D$ is a family of morphisms 
$\varphi \,{=}\, \{ \varphi_U\colon F(U,U)\To B \}_{\!U\in\C}^{}$ such that
the diagram
  \bee4010{
  \xymatrix @R+8pt{ & F(V,U) \ar^{F(\idsm_V,f)}[dr]\ar_{F(f,\idsm_U)}[dl]\\
  F(U,U) \ar^{\varphi_U^{}}[dr] && F(V,V)\ar^{\varphi_V^{}}[dl] \\ & B\,
  } }
commutes for all $f\,{\in}\,\Hom(U,V)$.
A \emph{coend} $(C,\iota)$ for a functor $F\colon \CopC\To\D$
is an object $C\,{\in}\,\D$ together with a dinatural transformation $\iota$ that
has the universal property that for any dinatural transformation
$\varphi\colon F\,{\Rightarrow}\,B$ there is a unique morphism
$\kappa \,{\in}\, \Hom_\D(C,B)$ such that $\varphi_U \,{=}\, \kappa \,{\circ}\, 
\iota_U$ for all objects $U$ of \C. Coends are unique up to unique isomorphism. 
The finiteness properties of the categories we are working with guarantee the 
existence of all coends we need.

Several different coends turn out to be of interest to us. The one relevant for
us now is the coend $\H \,{:=}\, \coen U U^\vee \oti U$ of the functor from \CopC\
to \C\ that acts on objects as $(U,V) \,{\mapsto}\, U^\vee \oti V$.
As shown in \cite{maji25,lyub8}, \H\ has a natural structure of a Hopf algebra
$(\H,m_\H,\eta_\H,\Delta_\H,\eps_\H,\apo_\H)$ internal to \C; its structural
morphisms are given by
  \be \begin{array}{ll}
  m_\H \circ (i_U\oti i_V)
  := i_{V\otimes U} \circ (\gamma_{U,V}\oti\id_{V\otimes U})
  \circ (\id_{U^\vee} \oti c_{U,V^\vee\otimes V}) \,,
  \qquad
  \eta_\H := i_\one \,,
  \\[-1.3em]\\[4mm]
  \Delta_\H \circ i_U := (i_U\oti i_U) \circ (\id_{U^\vee} \oti b_U \oti \id_U) \,,
  \qquad
  \eps_\H \circ i_U := d_U \,,
  \\[-1.3em]\\[4mm]
  \apo_\H \circ i_U := (d_U \oti i_{U^\vee}) \circ (\id_{U^\vee} \oti 
  c_{U^{\vee\vee},U} \oti \id_{U^\vee}) \circ (b_{U^\vee} \oti c_{U^\vee,U}) \,,
  \end{array} \label{H_hopf}
  \ee
where $\gamma_{U,V}^{}$ is the canonical identification of $U^\vee\oti V^\vee$
with $(V\oti U)^\vee$.
(Here it is used that a morphism $f$ with domain the coend \H\ is uniquely
determined by the dinatural family $\{ f\,{\circ}\, i_U\}$ of morphisms.)
In graphical notation,\,%
 \footnote{~The picture \eqref{coend_m}, as well as \eqref{QHX}, \eqref{Qrep} and
 \eqref{Qrep2} below, describe morphisms in the monoidal category \C, which
 (generically) is genuinely braided, i.e.\ over- and underbraiding are
 different morphisms. In the pictures this is indicated by labeling the
 braiding explicitly with the symbol $c$.
 In contrast, the other pictures we display refer to the category of \findim\
 \ko-vector spaces, for which the braiding is just the flip map $\tau$, so that
 over- and underbraiding coincide. Note that $\tau$ is just a linear map,
 rather than a morphism of \HMod\ or \HBimod; nevertheless the maps described
 in those pictures \emph{are} morphisms in the relevant categories.}
  \eqpic{coend_m} {390} {108} {
   \put(0,113) {
  \put(0,0)   {\Includepichtft{01a}}
  \put(-4,-8)   {\scriptsize$ U^{\!\vee} $}
  \put(7,-8)    {\scriptsize$ U $}
  \put(-6.8,25) {\scriptsize$ \iota_{\!U}^{} $}
  \put(40.8,25) {\scriptsize$ \iota_{\!V}^{} $}
  \put(17.5,115){\scriptsize$ \H $}
  \put(22.6,68.3) {\scriptsize$ m_\H^{} $}
  \put(25,-8)   {\scriptsize$ V^{\!\vee} $}
  \put(36,-8)   {\scriptsize$ V $}
  \put(60,50)   {$ := $}
  \put(95,0) { {\Includepichtft{01b}}
  \put(-6,-8)   {\scriptsize$ U^{\!\vee} $}
  \put(6,-8)    {\scriptsize$ U $}
  \put(-11,68)  {\scriptsize$ \gamma_{U,V}^{} $}
  \put(27.8,89) {\scriptsize$ \iota_{V\!\otimes U}^{} $}
  \put(33.8,66) {\scriptsize$ \id_{V\!\otimes U} $}
  \put(17.5,115){\scriptsize$ \H $}
  \put(27,-8)   {\scriptsize$ V^{\!\vee} $}
  \put(39,-8)   {\scriptsize$ V $}
  \put(19.5,24.3) {\scriptsize$ c $}
  \put(30,31) {\scriptsize$ c $} 
   }
    \put(235,0) {
  \put(0,4) { {\Includepichtft{02a}}
  \put(-2,106)  {\scriptsize$ \H $}
  \put(25.3,106){\scriptsize$ \H $}
  \put(6,-8)    {\scriptsize$ U^{\!\vee} $}
  \put(18,-8)   {\scriptsize$ U $}
  \put(15.4,54.2) {\scriptsize$ \Delta_\H $}
  \put(21,32) {\scriptsize$ \iota_{\!U}^{} $}
   }
  \put(55,50)   {$ := $}
  \put(85,0) { {\Includepichtft{02b}
  \put(3,106)  {\scriptsize$ \H $}
  \put(30,106)  {\scriptsize$ \H $}
  \put(0,-8)    {\scriptsize$ U^{\!\vee} $}
  \put(32,-8)   {\scriptsize$ U $}
  \put(-7.8,67) {\scriptsize$ \iota_{\!U}^{} $}
  \put(40.3,67) {\scriptsize$ \iota_{\!U}^{} $}}
   } }
   }
    \put(0,20) {
  \put(0,12) { {\Includepichtft{05a}}
  \put(-.7,42) {\scriptsize$ \H $}
  \put(6,1)   {\scriptsize$ \eta_\H^{} $}
   }
  \put(25,25) {$ := $}
  \put(55,-5) {\Includepichtft{05b}
  \put(3.3,59){\scriptsize$ \H $}
  \put(13.3,27) {\scriptsize$ \iota_{\!\one}^{} $}}
  \put(135,0) { {\Includepichtft{06a}}
  \put(-3,-8) {\scriptsize$ U^{\!\vee} $}
  \put(16,-8) {\scriptsize$ U $}
  \put(13.4,54.5) {\scriptsize$ \eps_\H^{} $}
  \put(16.9,27.5) {\scriptsize$ \iota_{\!U}^{} $}
   }
  \put(174,30){$ := $}
  \put(203,0) {\Includepichtft{06b}
  \put(-3,-8) {\scriptsize$ U^{\!\vee} $}
  \put(16,-8) {\scriptsize$ U $}}
   }
  \put(290,-7) { {\Includepichtft{07a}}
  \put(-3,-8) {\scriptsize$ U^{\!\vee} $}
  \put(11,-8) {\scriptsize$ U $}
  \put(11.4,65) {\scriptsize$ \apo_\H $}
  \put(4.6,99)  {\scriptsize$ \H $}
  \put(14.4,37) {\scriptsize$ \iota_{\!U}^{} $}
   }
  \put(329,37){$ := $}
  \put(358,-7) {\Includepichtft{07b}
  \put(21,-8) {\scriptsize$ U^{\!\vee} $}
  \put(35,-8) {\scriptsize$ U $}
  \put(28.4,99) {\scriptsize$ \H $}
  \put(38.4,73) {\scriptsize$ \iota_{\!U^\vee_{}}^{} $}
  \put(22,38.1) {\scriptsize$ c $}
  \put(28.5,17) {\scriptsize$ c $} }
  }
\H\ also has a two-sided integral and a Hopf pairing. 
For semisimple modular \C, the coend \H\ is given by
$\H \,{=}\, \bigoplus_{i\in\IJ} S_i^\vee \oti S_i^{} \,{\in}\, \C$.

Morphism spaces of the form $\Hom(\H^g,V_1\oti\cdots\oti V_n)$
carry \cite{lyub6} natural representations of the mapping class group
$\Gamma_{\!g,n}$ of Riemann surfaces of genus $g$ with $n$ marked points.
We therefore call \H\ the \emph{\chirhh}.


\subsection{Modules over the \chirhh}

For any object $V$ of \C, the family of morphisms from $U^\vee \oti U\oti V$
to $\H \oti V$ on the right hand side of
   \eqpic{QHX} {100} {46} { \put(0,8){
   \put(0,0)  {\begin{picture}(0,0)(0,0)
                   \scalebox{.38}{\includegraphics{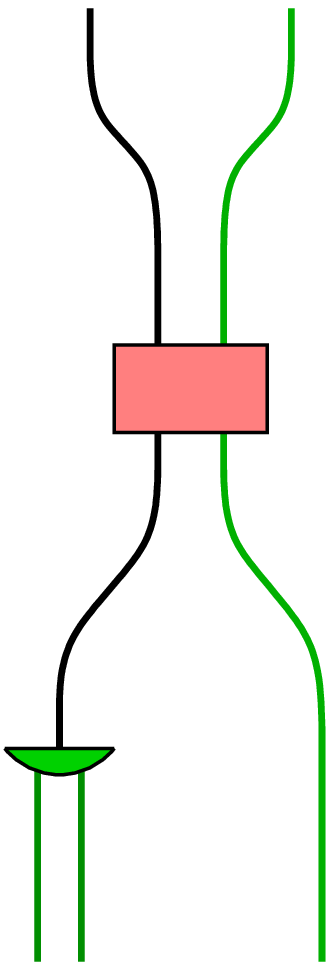}}\end{picture}
   \put(-15,60)  {$ \Qq_{\H,V} $}
   \put(-4,-8.5) {\scriptsize$ U^{\!\vee} $}
   \put(6.8,109) {\scriptsize$ \H $}
   \put(7,-8.5)  {\scriptsize$ U $}
   \put(31.3,-8.5){\scriptsize$ V $}
   \put(30.4,109){\scriptsize$ V $}
   \put(13.9,22.5) {\scriptsize$ \iota_{\!U}^{} $}
   }
   \put(60,50)   {$ := $}
   \put(94,0) {\begin{picture}(0,0)(0,0)
                   \scalebox{.38}{\includegraphics{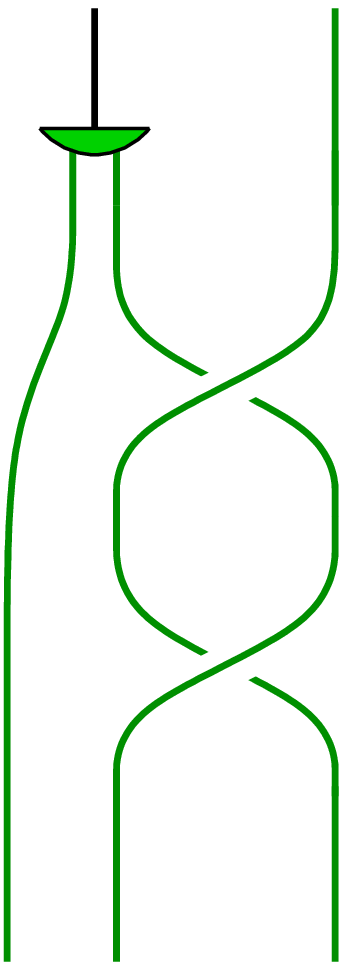}}\end{picture}
   \put(-6,-8.5) {\scriptsize$ U^{\!\vee} $}
   \put(7.8,109) {\scriptsize$ \H $}
   \put(8,-8.5)  {\scriptsize$ U $}
   \put(22.2,27.3) {\scriptsize$ c $}
   \put(22.2,58.4) {\scriptsize$ c $}
   \put(34,-8.5) {\scriptsize$ V $}
   \put(34.3,109){\scriptsize$ V $}
   \put(17.5,91) {\scriptsize$ \iota_{\!U}^{} $}
   } } }
is dinatural in the first two arguments and thus defines a morphism $\Qq_{\H,V}$ 
in $\EndC(\H\oti V)$, which we call the \emph{partial monodromy} of $V$ with 
respect to \H. Composition with the counit of \H\ supplies a morphism
  \be
  \rho^\H_V := (\eps_\H^{} \oti \id_V) \circ \Qq_{\H,V} ~\in\, 
  \mathrm{Hom}_\C(\H\oti V,V) \,.
  \label{def:rhoHV}
  \ee

\begin{lemma}\label{Lemma:rhoHV}
The morphism \eqref{def:rhoHV} endows the object $V$ of \C\ with the structure
of an \H-module internal to \C.
\end{lemma}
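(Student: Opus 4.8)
The plan is to verify the two defining axioms of an \H-module for $\rho^\H_V$ directly, using the universal property of the coend \H: two morphisms out of $\H \oti (\cdots)$ coincide as soon as they agree after precomposition with the dinatural structure maps $\iota_U$ (and with $\iota_U \oti \iota_{U'}$ in the associativity axiom). The preparatory step is to compute the composite $\rho^\H_V \circ (\iota_U \oti \id_V)\colon U^\vee \oti U \oti V \to V$ explicitly. Unfolding \eqref{def:rhoHV} and the defining picture \eqref{QHX} of $\Qq_{\H,V}$, and using $\eps_\H \circ \iota_U = d_U$ from \eqref{H_hopf}, one finds that this composite is the evaluation applied after the full monodromy (double braiding) of $U$ with $V$; schematically $\rho^\H_V \circ (\iota_U \oti \id_V) = (d_U \oti \id_V) \circ (\id_{U^\vee} \oti (c_{V,U} \circ c_{U,V}))$, up to the bookkeeping of strands. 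I would record this as the basic computational identity and phrase everything else in terms of it.

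For the unit axiom I would specialize to $U \,{=}\, \one$. Since $\eta_\H \,{=}\, \iota_\one$ and the braiding with the tensor unit is trivial, the monodromy in the identity above collapses, and $\eps_\H \circ \iota_\one = d_\one = \id_\one$; hence $\rho^\H_V \circ (\eta_\H \oti \id_V) = \id_V$ at once.

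The substantive part is the representation axiom $\rho^\H_V \circ (m_\H \oti \id_V) = \rho^\H_V \circ (\id_\H \oti \rho^\H_V)$, which I would check after precomposing both sides with $\iota_U \oti \iota_{U'} \oti \id_V$. On the left, the explicit formula for $m_\H \circ (\iota_U \oti \iota_{U'})$ from \eqref{H_hopf} rewrites the expression through $\iota_{U' \otimes U}$ (together with the identification $\gamma$ and a braiding), so that by the computational identity the left-hand side becomes the monodromy-and-evaluation attached to the single object $U' \oti U$. On the right, the inner $\rho^\H_V$ produces the monodromy along $U'$ and the outer one the monodromy along $U$, each capped by its own evaluation. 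The two sides then agree precisely because the full monodromy is multiplicative in its braided argument: by naturality of the braiding and the hexagon identities, the double braiding with $U' \oti U$ factors into the double braiding with $U'$ composed with the double braiding with $U$, in the order dictated by the product $m_\H$, and the two partial evaluations combine compatibly. I would carry out this comparison in the graphical calculus, sliding the $V$-strand through the two coend structure maps and cancelling the braidings via the hexagons.

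The main obstacle is entirely one of bookkeeping: matching the braiding conventions (over- versus underbraiding, as flagged in the footnote to \eqref{coend_m}) and the identification $\gamma_{U,V}$ occurring in $m_\H$ against the two nested monodromies on the right-hand side, and in particular checking that the order in which $U$ and $U'$ enter the product $m_\H$ is exactly the one compatible with the order in which the two partial monodromies are composed. The conceptual input — multiplicativity of the monodromy, which is just a hexagon identity — is routine; the care lies in getting these conventions to line up.
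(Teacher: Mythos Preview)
Your proposal is correct and follows essentially the same approach as the paper: both verify unitality directly from $\eta_\H = \iota_\one$ and check the representation axiom after precomposing with $\iota_U \oti \iota_{U'} \oti \id_V$, reducing the comparison to the multiplicativity of the double braiding, i.e.\ to the hexagon identities. The paper carries this out tersely in a single chain of graphical equalities, while you spell out the intermediate ``computational identity'' $\rho^\H_V \circ (\iota_U \oti \id_V) = (d_U \oti \id_V)\circ(\id_{U^\vee}\oti c_{V,U}\circ c_{U,V})$ more explicitly, but the substance is the same.
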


\begin{proof}
Unitality follows directly from the definition of the unit of \H.
Compatibility with the product of \H\ reduces to an application of the
defining properties of the braiding:
  \eqpic{Qrep} {420} {55} {
  \put(0,0)  {\begin{picture}(0,0)(0,0)
        \scalebox{.38}{\includegraphics{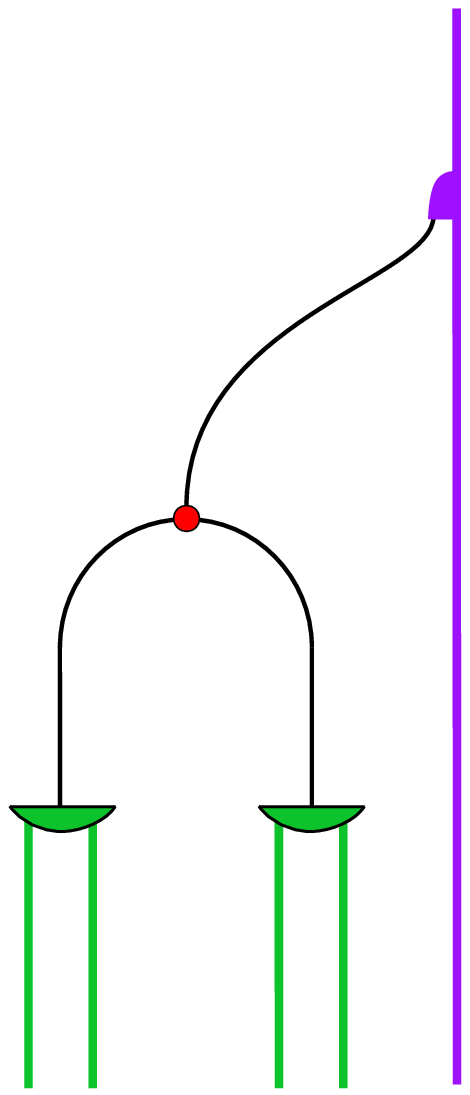}}\end{picture}
  \put(-5,-8.5) {\scriptsize$X^{\!\vee}$}
  \put(7,-8.5)  {\scriptsize$X$}
  \put(24,-8.5) {\scriptsize$Y^{\!\vee}$}
  \put(35,-8.5) {\scriptsize$Y$}
  \put(46,-8.5) {\scriptsize$V$}
  \put(46.4,123)  {\scriptsize$V$}
  \put(22,64.5) {\scriptsize$m_\H$}
  \put(34,97)   {\scriptsize$\rho^\H_V$}
  \put(-7.5,28.6) {\scriptsize$\iota_{\!X}^{}$}
  \put(21.5,28.6) {\scriptsize$\iota_{\!Y}^{}$}
  }
  \put(80,55)   {$ = $}
  \put(110,0)  {\begin{picture}(0,0)(0,0)
        \scalebox{.38}{\includegraphics{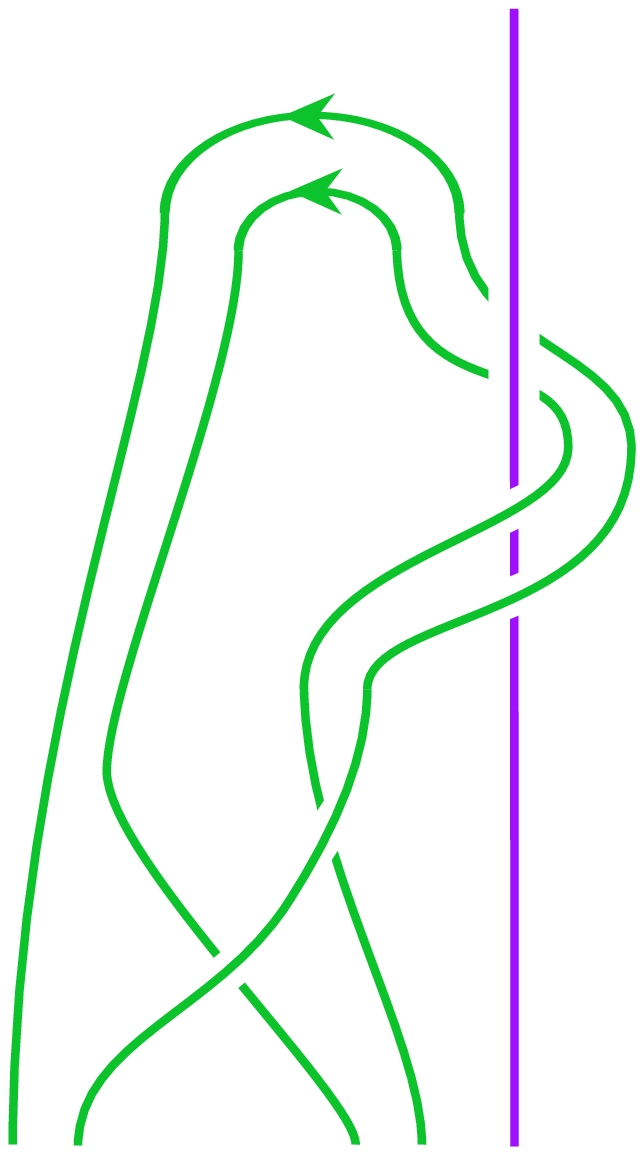}}\end{picture}
  \put(-6,-8.5) {\scriptsize$X^{\!\vee}$}
  \put(6,-8.5)  {\scriptsize$X$}
  \put(33,-8.5) {\scriptsize$Y^{\!\vee}$}
  \put(44,-8.5) {\scriptsize$Y$}
  \put(53,-8.5) {\scriptsize$V$}
  \put(53,128)  {\scriptsize$V$}
  \put(21.5,13.5) {\scriptsize$c$}
  \put(37.5,33.5) {\scriptsize$c$}
  \put(57.5,55.5) {\scriptsize$c$}
  \put(49.5,71.5) {\scriptsize$c$}
  \put(49.5,79.5) {\scriptsize$c$}
  \put(58.5,91.5) {\scriptsize$c$}
  }
  \put(195,55)   {$ = $}
  \put(230,0)  {\begin{picture}(0,0)(0,0)
        \scalebox{.38}{\includegraphics{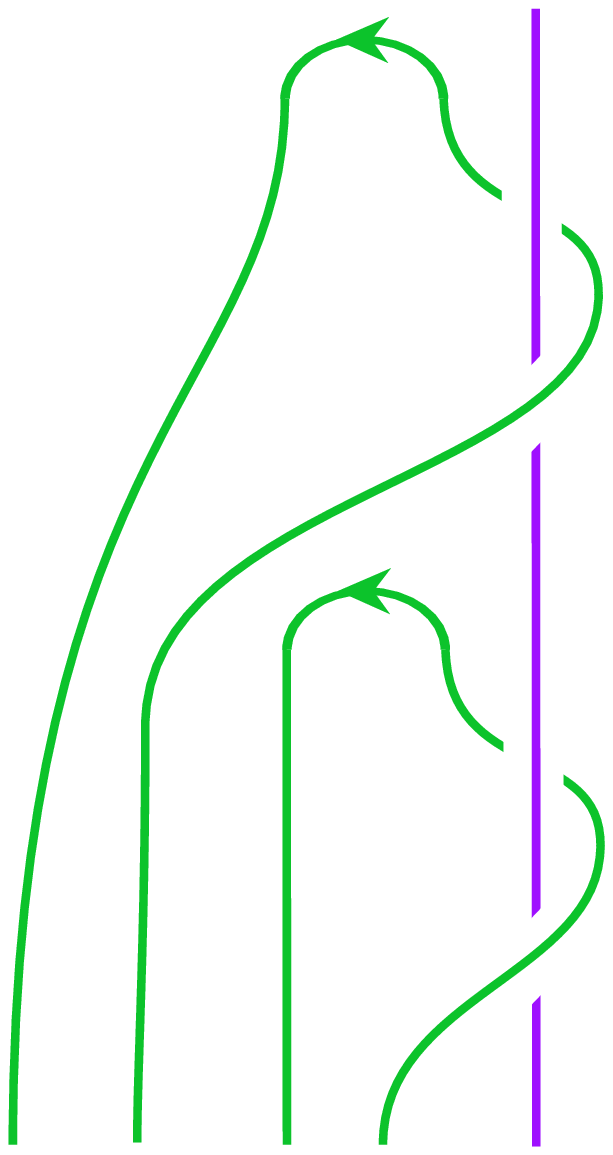}}\end{picture}
  \put(-6,-8.5) {\scriptsize$X^{\!\vee}$}
  \put(10,-8.5) {\scriptsize$X$}
  \put(26,-8.5) {\scriptsize$Y^{\!\vee}$}
  \put(39,-8.5) {\scriptsize$Y$}
  \put(54.9,-8.5) {\scriptsize$V$}
  \put(55.5,128){\scriptsize$V$}
  \put(60,17.5) {\scriptsize$c$}
  \put(60,43.5) {\scriptsize$c$}
  \put(60,78.5) {\scriptsize$c$}
  \put(60,104.5){\scriptsize$c$}
  }
  \put(315,55)   {$ = $}
  \put(350,0)  {\begin{picture}(0,0)(0,0)
        \scalebox{.38}{\includegraphics{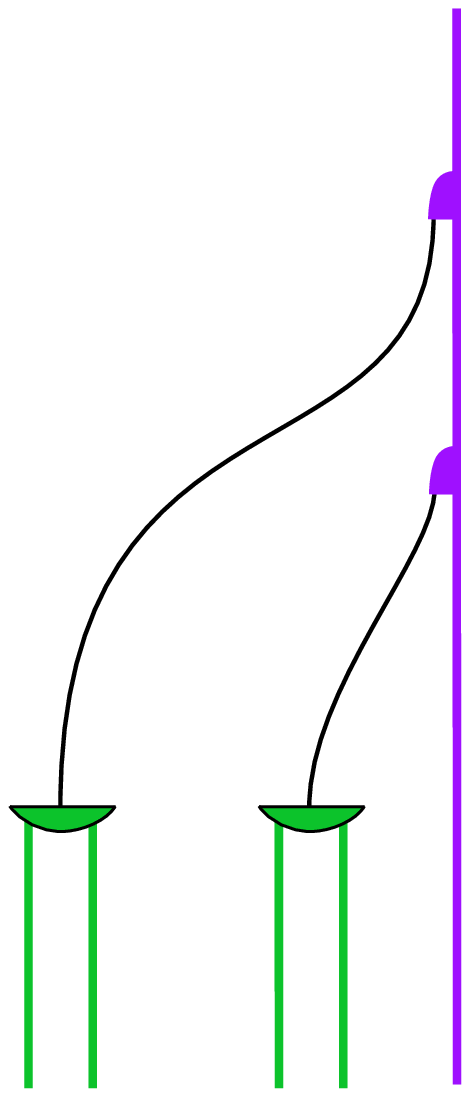}}\end{picture}
  \put(-5,-8.5) {\scriptsize$X^{\!\vee}$}
  \put(7,-8.5)  {\scriptsize$X$}
  \put(24,-8.5) {\scriptsize$Y^{\!\vee}$}
  \put(35,-8.5) {\scriptsize$Y$}
  \put(46,-8.5) {\scriptsize$V$}
  \put(46.3,123){\scriptsize$V$}
  \put(34,68)   {\scriptsize$\rho^\H_V$}
  \put(34,97)   {\scriptsize$\rho^\H_V$}
  \put(12,30)   {\scriptsize$\iota_{\!X}^{}$}
  \put(40,30)   {\scriptsize$\iota_{\!Y}^{}$}
  } }
Here the first equality combines the definitions of $\rho^\H_V$ in \eqref{def:rhoHV}
with those of $m_\H$ and $\eps_\H$ in \eqref{H_hopf}.
\end{proof}

The left action \eqref{def:rhoHV} of \H\ on an object $V$ of \C\ should not be confused
with the right \emph{co}action of \H\ on $V$ that is obtained \cite{lyub8} 
by combining the dinatural morphism $i_U$ with the coevaluation for $U$.
The latter only uses the duality of \C, whereas the former uses in addition the
braiding, or rather, the monodromy of \C.

For algebras in monoidal categories one can set up their representation theory in
a way very similar as for conventional \ko-algebras. If the category is sovereign,
one can in particular consider the character $\chii^\H_V $ of the \H-module
$(V,\rho^\H_V)$; it is given by
  \eqpic{Qrep2} {100} {44} {
  \put(-70,51) {$ \chii^\H_V \circ i_U ~= $}
  \put(0,0)  {\begin{picture}(0,0)(0,0)
                   \scalebox{.38}{\includegraphics{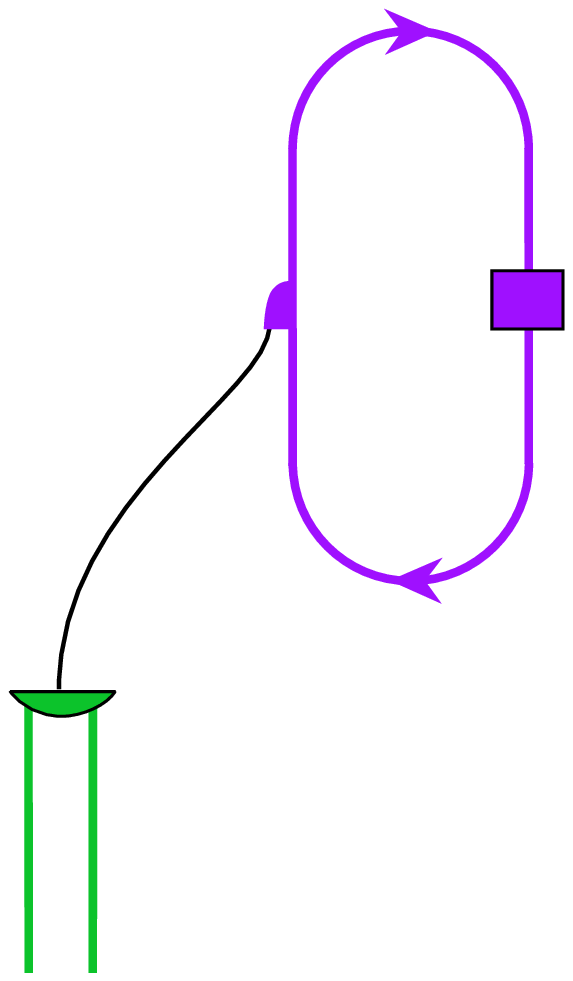}}\end{picture}
  \put(-5,-8.5){\scriptsize$U^{\!\vee}$}
  \put(7,-8.5) {\scriptsize$U$}
  \put(13,30)  {\scriptsize$\iota_{\!U}^{}$}
  \put(17.5,73){\scriptsize$\rho^\H_V$}
  \put(33,58)  {\scriptsize$V$}
  \put(63,73)  {\scriptsize$\pi_V$}
  }
  \put(81,51)  {$ = $}
  \put(110,0) {\begin{picture}(0,0)(0,0)
                   \scalebox{.38}{\includegraphics{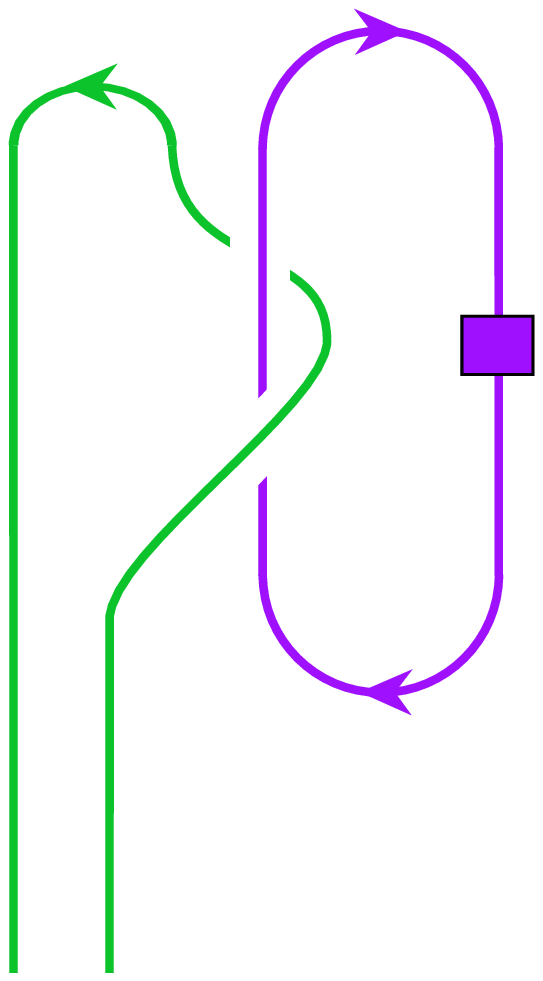}}\end{picture}
  \put(-6,-8.5){\scriptsize$U^{\!\vee}$}
  \put(7,-8.5) {\scriptsize$U$}
  \put(30,44)  {\scriptsize$V$}
  \put(21,59)  {\scriptsize$c$}
  \put(30,80)  {\scriptsize$c$}
  \put(60,68)  {\scriptsize$\pi_V$}
   } }
A new feature appearing here as compared to vector spaces is that the left
and right dual of an object of \C\ need not be equal, but only naturally
isomorphic. This necessitates the insertion of an appropriate isomorphism
$\pi_V \,{\in}\, \mathrm{Hom}_\C(V^\vee,{}^{\vee\!}V)$, 
forming part of a sovereign structure on \C.

Since \H\ is a Hopf algebra, there is a natural notion of dual module.
The character of the \H-module $V^\vee$ dual to $V$ turns out to be given
by essentially the same morphism as in \eqref{Qrep2}, except that the braidings
are replaced by inverse braidings.

\medskip

Now we specialize to the case $\C \,{=}\, \HMod$ for a factorizable Hopf algebra
$H$. In this case one can describe the coend \H\ explicitly 
\cite{lyub6,kerl5,vire4}: as an object of \C{}
it is the vector space \Hs\ endowed with the coadjoint left $H$-action, and the
morphisms of the dinatural family $i$ are given by
  \eqpic{pic-iU} {120} {37} {
  \put(0,39)    {$ i_U ~= $}
  \put(52,0) { \begin{picture}(0,0)(0,0)
       \scalebox{.38}{\includegraphics{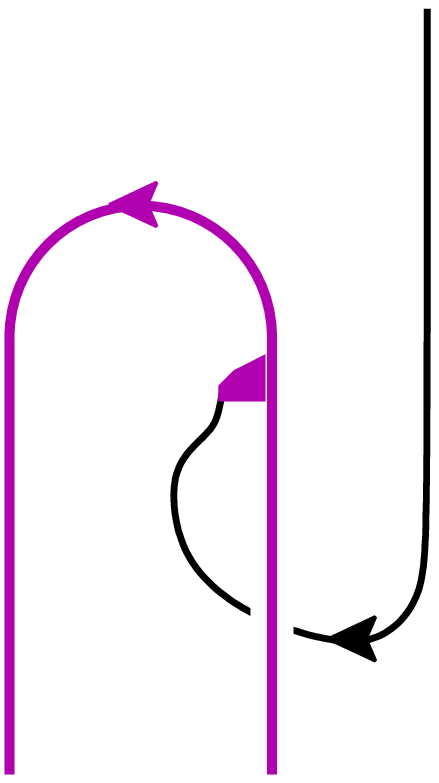}}\end{picture}
  \put(-4,-8.5) {\scriptsize$ U^* $}
  \put(25,-8.5) {\scriptsize$ U $}
  \put(31,43)   {\scriptsize$ \rho_{\!U}^{} $}
  \put(43,89)   {\scriptsize$ \Hss $}
  } }
Further, the monodromy appearing in \eqref{def:rhoHV} is now given by the action 
of the monodromy matrix $Q$ of $H$ on the tensor product $H$-module $U\oti V$, and
the sovereignty isomorphism is given by
  \eqpic{pic-piV} {175} {36} {
  \put(0,39)    {$ \pi_V ~= $}
  \put(53,0)  {\begin{picture}(0,0)(0,0)
       \scalebox{.38}{\includegraphics{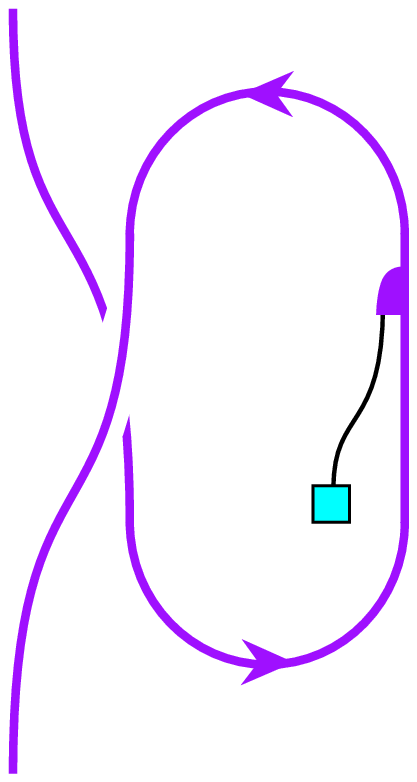}}\end{picture}
  \put(-4.5,-8) {\scriptsize$ V^*_{} $}
  \put(-3,88)   {\scriptsize$ V^*_{} $}
  \put(28,28)   {\scriptsize$ t $}
  \put(45,52)   {\scriptsize$ \rho_{\!V}^{} $}
  }
  \put(123,39)  {$\in \Endk(V^*_{})$}
  }
with $t \,{=}\, u\,v^{-1}$ the product of the Drinfeld element 
$u \,{:=}\, m \,{\circ}\, (\apo\oti\idH) \,{\circ}\, R_{21} \,{\in}\, H$
and the inverse of the ribbon element of $H$.

As a consequence we have the following description of \H-characters:

\begin{lemma}\label{lem:chiiH=drinchiiH}
The character of the \H-module $(M,\rho^\H_M)$ obeys
  \be
  \chii^\H_M = \chii^H_M \circ m \circ (t \oti \drin)
  = \chii^H_M \circ m \circ (\drin \oti t)
  \label{chiiH=drinchiiH}
  \ee
with
  \eqpic{def-drin} {220} {18} {
  \put(0,24) {$ \drin ~:=~ (b_H \oti \id_H) \circ (\idHs \oti Q) ~= $}
  \put(194,0)  { \begin{picture}(0,0)(0,0)
                   \scalebox{.38}{\includegraphics{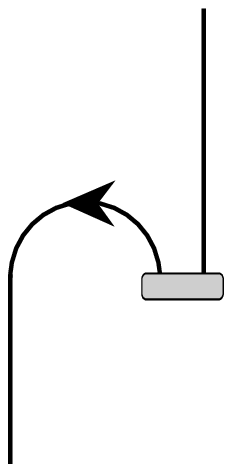}}\end{picture}
  \put(-5,-8.5){\scriptsize$\Hs$}
  \put(18,54)  {\scriptsize$H$}
  \put(26,18)  {\scriptsize$Q$}
  } }
$($and with $t$ regarded as an element of $\Homk(\ko,H)\,)$.
\end{lemma}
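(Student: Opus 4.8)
The plan is to exploit the universal property of the coend $\H = \coen U U^\vee\oti U$. Both sides of \eqref{chiiH=drinchiiH} are linear maps $\H \To \one$, i.e.\ functionals on the underlying vector space $\Hs$, and any morphism out of a coend is determined by its composites with the dinatural family. It therefore suffices to verify the identity after precomposition with $i_U$ for every object $U$ of \C; equivalently, to check equality on the elements $i_U(\eta\oti u)\,{\in}\,\Hs$, which span $\Hs$ because the $i_U$ are jointly epic. Everything thus reduces to comparing two morphisms $U^\vee\oti U \To \one$.

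For the left-hand side I would unfold the definition \eqref{Qrep2} of the \H-character in the concrete realization $\C \,{=}\, \HMod$. Using that $i_U$ is the matrix-element map of \eqref{pic-iU}, that the partial monodromy $\Qq_{\H,M}$ of \eqref{QHX}--\eqref{def:rhoHV} is implemented by the monodromy matrix $Q \,{=}\, \sum_\ell h_\ell\oti k_\ell$ acting on $U\oti M$, and that the sovereign structure $\pi_M$ of \eqref{pic-piV} inserts the action of $t \,{=}\, u\,v^{-1}$ into the trace over $M$, the picture \eqref{Qrep2} collapses to
\[
  \chii^\H_M \circ i_U (\eta\oti u)
  = \sum_\ell \langle \eta,\, \rho_U(h_\ell)\, u\rangle\;
    \mathrm{tr}_M\big(\rho_M(t\, k_\ell)\big)\,.
\]
Here the $U$-dependence is separated from the trace over $M$ purely by the defining properties of the braiding --- exactly the pictorial manipulation already carried out in the proof of Lemma~\ref{Lemma:rhoHV} (see \eqref{Qrep}).

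For the right-hand side I would compute directly. By \eqref{def-drin} the Drinfeld map sends $\xi\,{\in}\,\Hs$ to $\drin(\xi) \,{=}\, (\xi\oti\id_H)(Q) \,{=}\, \sum_\ell \langle\xi, h_\ell\rangle\, k_\ell$; applied to $\xi \,{=}\, i_U(\eta\oti u)$ and using \eqref{pic-iU} this gives $\drin(i_U(\eta\oti u)) \,{=}\, \sum_\ell \langle\eta,\rho_U(h_\ell)\, u\rangle\, k_\ell$. Left-multiplying by $t$ and applying the ordinary $H$-character $\chii^H_M \,{=}\, \mathrm{tr}_M\,{\circ}\,\rho_M$ reproduces precisely the expression above, which establishes the first equality in \eqref{chiiH=drinchiiH}. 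The second equality, with $\drin\oti t$ in place of $t\oti\drin$, then follows at once because $\chii^H_M$ is a class function: cyclicity of the trace gives $\chii^H_M(t\, k_\ell) \,{=}\, \chii^H_M(k_\ell\, t)$.

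The only genuine obstacle is the bookkeeping of conventions: one must track carefully which leg of $Q$ acts on $U$ and which on $M$, the directions of the two braidings in \eqref{Qrep2}, and whether $t$ enters on the left or the right of $k_\ell$. Fixing these consistently is what pins down the precise form of the identity. Reassuringly, the two admissible placements of $t$ are exactly the two equivalent right-hand sides in \eqref{chiiH=drinchiiH}, so any residual ordering ambiguity is absorbed by the class-function property rather than affecting the result.
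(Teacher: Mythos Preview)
Your argument is correct and follows essentially the same route as the paper: unfold the concrete realizations of $i_U$, the monodromy, and the sovereign structure $\pi_M$ in \HMod\ to reduce the \H-character to a trace over $M$ of the $H$-action of $t\,k_\ell$, paired against matrix elements on $U$, and then recognize this as $\chii^H_M \circ m \circ (t \oti \drin)$. The paper justifies the second equality by invoking sphericity of the category of \ko-vector spaces, which is exactly the cyclicity-of-trace\,/\,class-function argument you use.
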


\begin{proof}
Inserting \eqref{pic-iU} and \eqref{pic-piV} into \eqref{Qrep2} and using that the
monodromy in \HMod\ is furnished by the action of the monodromy matrix $Q$, one obtains
  \eqpic{char-drin} {120} {35} {
  \put(0,40){$\chii^\H_M ~=$}
  \put(57,0) {\begin{picture}(0,0)(0,0)
                   \scalebox{.38}{\includegraphics{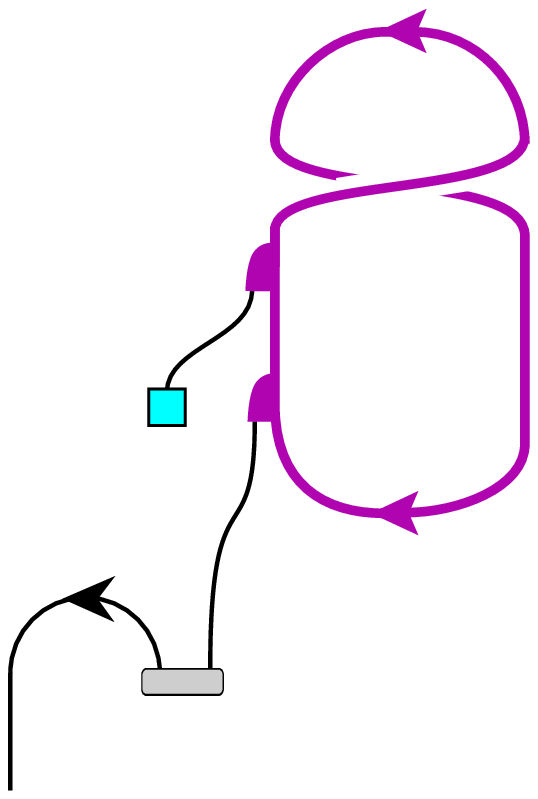}}\end{picture} 
  \put(-4.4,-8.5) {\scriptsize$ \Hs $}
  \put(10.8,40.5){\scriptsize$t$}
  \put(16.2,3.8){\scriptsize$Q$}
  \put(31,42.5) {\scriptsize$ \rho_{\!M}^{} $}
  \put(31,56.5) {\scriptsize$ \rho_{\!M}^{} $}
  \put(58,76)   {\scriptsize$M$}
  } }
Using the \rep\ property and comparing with \eqref{def_char} then yields
the first of the equalities \eqref{chiiH=drinchiiH}. The expression given by 
the second equality can be obtained in a similar way; that both expressions are 
valid is a consequence of the sphericity of the category of \ko-vector spaces.
\end{proof}

\begin{remark}
The mapping $\drin$ \eqref{def-drin} is called the \emph{Drinfeld map}.
A priori $\drin$ is just a linear map in $\Homk(\Hs,H)$, but actually
\cite[Prop.\,2.5(5)]{coWe4} it is a module morphism from \Hs\ with the
left coadjoint $H$-action to $H$ with the left adjoint action on itself.
\end{remark}


\subsection{Modules over the \bulkhh}\label{sec:bulkhh}

In CFT terms, what we have dealt with so far is a chiral half of the theory.
We now proceed from the chiral to the full theory. In the present setting this
means that we no longer work with the ribbon category \HMod\ of left $H$-modules, 
but now with the ribbon category
$\HBimod \,{\simeq}\, \overline{\HMod} \,{\boxtimes} \HMod$ of $H$-bimodules.
There are then \emph{two} coends of interest to us. The first is the
\emph{\bulkhh} \K. This is just the bimodule version of the coend \H, i.e.\
  \be
  \K = \coen {X\in\HBimod} X^\vee \oti X \,,
  \ee
where the bifunctor $\otimes\colon \HBimod \,{\times}\, \HBimod \To \HBimod$ is now
the tensor product in \HBimod. Explicitly \cite[App.\,A.4]{fuSs3}, \K\ is the
\emph{coadjoint bimodule}, i.e.\ the vector space $\Hs \otik \Hs$
endowed with the coadjoint left $H$-action on the first tensor factor and with
the coadjoint right $H$-action on the second factor, and the
dinaturality morphisms are given by \cite[(A.30)]{fuSs3}
  \eqpic{def_iHaa_X} {90} {44} {
  \put(0,44)     {$ i^\K_X ~= $}
  \put(53,0) {\begin{picture}(0,0)(0,0)
                   \scalebox{.38}{\includegraphics{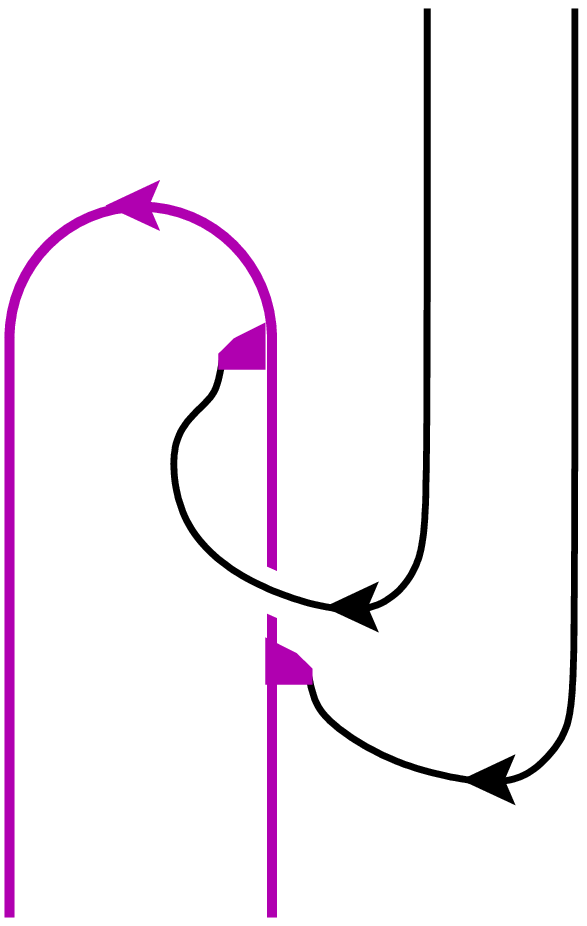}}\end{picture} 
  \put(-4.1,-8.5){\scriptsize$ X^{*_{}}_{\phantom:} $}
  \put(19.4,28)  {\scriptsize$ \ohr_{\!X}^{} $}
  \put(25.3,-8.5){\scriptsize$ X $}
  \put(31.2,62)  {\scriptsize$ \rho_{\!X}^{} $}
  \put(42.9,104) {\scriptsize$ \Hss $}
  \put(59.4,104) {\scriptsize$ \Hss $}
  } }
where $\rho_{\!X}$ and $\ohr_{\!X}$ are the left and right actions of $H$ on the
$H$-bimodule $X$ (compare the chiral version \eqref{pic-iU}).

For the characters of modules over the \bulkhh\ the following analogue of
Lemma \ref{lem:chiiH=drinchiiH} holds.

\begin{lemma}
The \K-character of a \K-module $(X,\rho^\K_X)$ can be expressed through
characters for the ordinary Hopf algebra $H\oti H\op$ as
  \be
  \chii^\K_X = \chii^{H\otimes H\op}_X \circ (m \oti m) \circ
  ( t \oti f_{Q^{-1}} \oti \drin \oti t) \,.
  \label{chiKX-fQi-fQ}
  \ee
\end{lemma}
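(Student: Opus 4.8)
The plan is to mirror the proof of Lemma~\ref{lem:chiiH=drinchiiH}, carrying out the analogous computation in the bimodule category $\HBimod \simeq \overline{\HMod}\boxtimes\HMod$ in place of $\HMod$. First I would write down the bimodule analogue of the character diagram \eqref{Qrep2}: precomposing $\chii^\K_X$ with the dinatural family $i^\K_X$ gives the partial trace (over the probe object) of $\rho^\K_X \circ \Qq_{\K,X}$, twisted by the sovereign isomorphism $\pi_X$ of $\HBimod$. Into this diagram I would substitute the explicit form of $i^\K_X$ from \eqref{def_iHaa_X}, in which the left action $\rho_X$ and the right action $\ohr_X$ of the bimodule $X$ appear on the two coadjoint factors of $\K = \Hs\otik\Hs$.

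The decisive input is that the braiding of $\HBimod$ factorizes through $\overline{\HMod}\boxtimes\HMod$. Since the first tensor factor carries the \emph{opposite} braiding, the partial monodromy $\Qq_{\K,X}$ splits into a part acting by $Q^{-1}$ on the left-action slot and a part acting by $Q$ on the right-action slot. Combining each of these with the corresponding coevaluation and the action morphism supplied by $i^\K_X$ reproduces, by exactly the manipulation that led from \eqref{char-drin} to \eqref{chiiH=drinchiiH}, the Drinfeld map $f_{Q^{-1}}$ on the first factor of $\Hs\otik\Hs$ and $\drin \,{=}\, f_Q$ on the second. This is the structural reason that the right-hand slot reproduces the chiral formula verbatim while the left-hand slot reproduces its opposite-braiding counterpart.

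The sovereign structure of $\HBimod$, inherited from that of $H$, then contributes one copy of $t \,{=}\, uv^{-1}$ through each of the two actions, just as $\pi_V$ did in the chiral formula \eqref{pic-piV}. Because the left and right actions multiply from opposite sides, the two copies land as a left multiplication $t\,{\cdot}\,f_{Q^{-1}}(-)$ on the first factor and a right multiplication $f_Q(-)\,{\cdot}\,t$ on the second. Finally, using the representation property to fuse the two action morphisms into the single action of $H\oti H\op$, the remaining diagram is precisely the partial trace of that action evaluated on these four inputs, i.e.\ $\chii^{H\otimes H\op}_X$ precomposed with $(m\oti m)\,{\circ}\,(t\oti f_{Q^{-1}}\oti\drin\oti t)$; this is the asserted identity \eqref{chiKX-fQi-fQ}.

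I expect the main obstacle to be the bookkeeping of the two independent braidings: one must keep careful track of which strand carries the ordinary and which the opposite braiding, so that $Q^{-1}$ lands on the left-action slot and $Q$ on the right-action slot, and one must verify, taking into account the opposite ribbon structure of the first factor, that it is $t$ itself (rather than its inverse) that is produced on each side and that the two copies appear with the correct left/right placement. Each of these is routine graphical calculus once the conventions for the sovereign structure and for the monodromy on a tensor product of bimodules are fixed, but it is the step where errors in the orientation of a braiding or in the side of a multiplication are easiest to make.
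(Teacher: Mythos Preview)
Your proposal is correct and follows essentially the same route as the paper's proof: insert the explicit dinatural family $i^\K_X$ from \eqref{def_iHaa_X} into the bimodule analogue of \eqref{Qrep2}, use that the monodromy of $\HBimod$ acts by $Q^{-1}$ on the left-action slot and by $Q$ on the right-action slot to extract $f_{Q^{-1}}$ and $\drin$, and then treat the bimodule sovereign isomorphism $\pi_X$ (which acts by $t$ from both sides) exactly as in Lemma~\ref{lem:chiiH=drinchiiH}. The only difference is emphasis: the paper presents the intermediate step as the picture \eqref{KX_char} and cites \cite[(4.8)]{fuSs3} for the form of $\pi_X$, whereas you phrase the same computation in terms of the ribbon equivalence $\HBimod\simeq\overline{\HMod}\boxtimes\HMod$.
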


\begin{proof}
Inserting \eqref{def_iHaa_X} into the formula \eqref{Qrep2} for the character,
as adapted to the present situation (i.e.\ in particular with \H\ replaced by \K\
and with the monodromy the one of \HBimod), it follows that
  \eqpic{KX_char} {240} {39} {
  \put(-52,41)   {$ \chii^\K_X = $}
  \put(0,0)  {\Includepichtft{123m}
  \put(12.3,40.7){\scriptsize$Q^{-1}$}
  \put(26.7,13.3){\scriptsize$Q$}
  \put(-6,-8.5)  {\scriptsize$H^*$}
  \put(7,-8.5)   {\scriptsize$H^*$}
  \put(29.3,75.2){\scriptsize$ \rho_{\!X}^{} $}
  \put(45.6,62.2){\scriptsize$ \ohr_{\!X}^{} $}
  \put(74,61.5)  {\scriptsize $\pi_X^{}$}
  }
  \put(95,41)    {$ = $}
  \put(125,0) {\Includepichtft{123n}
  \put(-4.5,-8.5)  {\scriptsize$ H $}
  \put(7,-8.5)   {\scriptsize$ H $}
  \put(8.3,74.5) {\scriptsize$ \rho_{\!X}^{} $}
  \put(24.4,61.3){\scriptsize$ \ohr_{\!X}^{} $}
  \put(53,61.5)  {\scriptsize $\pi_X^{}$}
  }
  \put(200,41)   {$ \circ\;\;(f_{Q^{-1}}\oti \drin)\,, $}
  }
where $\drin$ is the Drinfeld map \eqref{def-drin}
and $f_{Q^{-1}}$ is the analogous morphism with $Q$ replaced by $Q^{-1}$.
 \\
The sovereignty morphism $\pi_X$ is given by the bimodule analogue of 
\eqref{pic-piV}, i.e.\ with the left action by the element $t\,{\in}\, H$ 
complemented with a right action by $t$ \cite[(4.8)]{fuSs3}. Each of the two 
occurrences of $t$ can be manipulated in the same way as the single $t$
in Lemma \ref{lem:chiiH=drinchiiH}. Regarding the $H$-bimodule $X$ as a left
$H\oti H\op$-module, this yields \eqref{chiKX-fQi-fQ}.
\end{proof}

The result \eqref{chiKX-fQi-fQ} is in fact not so surprising, because 
there is a ribbon equivalence between the categories of 
$H\,{\otimes}\,H$-modules and $H$-bimodules (the equivalence functor is 
given in formula (A.22) of \cite{fuSs3}), and this equivalence maps the 
$H\,{\otimes}\,H$-module $\H \otik \H$ to the bimodule \K. Furthermore,
invoking also Lemma \ref{lem:chiiH=drinchiiH} and formula
\eqref{chiABX}, we arrive at a \emph{chiral decomposition} of \K-characters:

\begin{proposition}\label{prop:chiKX}
The \K-character of a \K-module $X$ can be expressed through
characters for the \chirhh\ in the form
  \be
  \chii^\K_X = \sum_{i,j\in\IJ} n_{i,j}\, \chii^\H_{\overline i} \oti \chii^\H_j \,,
  \label{chiKX-chiL-chiL}
  \ee
where $\chii^\H_i$ is the \H-character that via \eqref{chiiH=drinchiiH} corresponds
to the irreducible $H$-character $\chii^H_i$,
$ \chii^\H_{\overline i}$ is the \H-character of the corresponding dual \H-module,
and $n_{i,j}$ $(i,j\,{\in}\,\IJ)$ are
the non-negative integers that appear in formula \eqref{chiABX}.
\end{proposition}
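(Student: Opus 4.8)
The plan is to assemble the three ingredients that the preceding discussion has prepared: the expression \eqref{chiKX-fQi-fQ} for the \K-character of $X$ in terms of the character $\chii^{H\otimes H\op}_X$ of $X$ regarded as a module over the ordinary Hopf algebra $H\oti H\op$; the bilinear decomposition \eqref{chiABX} of characters of $A\oti B$-modules; and Lemma \ref{lem:chiiH=drinchiiH}, which expresses \H-characters through $H$-characters composed with the Drinfeld map. Conceptually this is the character-level shadow of the ribbon equivalence that identifies $\H\otik\H$ with \K, so a decomposition of the claimed shape is to be expected; the work is to pin down the coefficients and the duals.

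First I would apply \eqref{chiABX} with $A\,{=}\,H$ and $B\,{=}\,H\op$ to the module $X$, writing $\chii^{H\otimes H\op}_X \,{=}\, \sum_{i,j\in\IJ} n_{i,j}\, \chii^H_i \oti \chii^{H\op}_j$ with non-negative integers $n_{i,j}$; these are precisely the integers named in the statement. Substituting this into \eqref{chiKX-fQi-fQ}, the key structural observation is that the map $(m\oti m)\circ(t\oti f_{Q^{-1}}\oti\drin\oti t)$ appearing there is itself a tensor product of two maps $\Hs\,{\to}\,H$, one feeding the first copy of \Hs\ in \K\ and one feeding the second (the first $m$ groups the $t\oti f_{Q^{-1}}$ legs, the second the $\drin\oti t$ legs). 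Since each $\chii^H_i \oti \chii^{H\op}_j$ is a product functional, each summand factorizes accordingly as
\[
  n_{i,j}\,\big[\chii^H_i \circ m \circ (t\oti f_{Q^{-1}})\big] \oti \big[\chii^{H\op}_j \circ m \circ (\drin\oti t)\big]\,.
\]

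It then remains to recognize each chiral factor as an \H-character. The second bracket is literally the right-hand side of the second equality in \eqref{chiiH=drinchiiH}, so, after rewriting the $H\op$-character as an $H$-character via $\chii^{H\op}_M \,{=}\, \chii^H_{M^*}$ and matching labels in \IJ, it equals $\chii^\H_j$. The first bracket differs only in that the Drinfeld map \drin\ is replaced by $f_{Q^{-1}}$, i.e.\ the monodromy $Q$ is inverted; by the description of dual \H-modules given after \eqref{Qrep2} (braidings replaced by inverse braidings, which in \HMod\ amounts to $Q\,{\to}\,Q^{-1}$), this bracket is exactly the \H-character $\chii^\H_{\overline i}$ of the dual module. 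Summing over $i,j\,{\in}\,\IJ$ then yields \eqref{chiKX-chiL-chiL}.

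The step I expect to be the main obstacle is the duality bookkeeping in the last paragraph: confirming that the $f_{Q^{-1}}$-factor produces exactly the dual \H-character $\chii^\H_{\overline i}$, with the two insertions of $t$ in their correct positions, and tracking the labels of the simple $H\op$-modules back to \IJ\ so that the second factor comes out as $\chii^\H_j$ rather than $\chii^\H_{\overline j}$. This is the same kind of care already needed in passing from \eqref{A-bimod-char} to \eqref{A-bimod-char-2}, and it is precisely what makes the asymmetric placement of the bar (on $i$ but not on $j$) in \eqref{chiKX-chiL-chiL} come out correctly.
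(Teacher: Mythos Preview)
Your proposal is correct and follows essentially the same route as the paper: start from \eqref{chiKX-fQi-fQ}, insert the bilinear decomposition \eqref{chiABX} for $A\,{=}\,H$, $B\,{=}\,H\op$, and then identify the two resulting chiral factors via Lemma \ref{lem:chiiH=drinchiiH} (for the $\drin$-leg) and the remark after \eqref{Qrep2} about dual \H-modules (for the $f_{Q^{-1}}$-leg). The paper's proof is terser about the $H\op$-labeling issue you flag at the end, writing the decomposition directly with $\chii^H_j$ rather than $\chii^{H\op}_j$, but otherwise the argument is the same.
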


\begin{proof}
Manipulating the sovereignty isomorphism in \eqref{KX_char}
in the same way as in the proof of Lemma \ref{lem:chiiH=drinchiiH} and invoking 
\eqref{chiABX} for $A \,{=}\, H$ and $B \,{=}\, H\op$ as well as
the equivalence between $H$-bimodules and $H\oti H\op$-modules, we arrive at
  \be
  \chii^\K_X = \sum_{i,j\in\IJ} n_{i,j}\,
  \big[ \chii^H_i \,{\circ}\, m \,{\circ}\, (t \oti f_{Q^{-1}}) \big] \otimes
  \big[ \chii^H_j \,{\circ}\, m \,{\circ}\, (\drin \oti t) \big] \,.
  \label{chiKX=nij..}
  \ee
By \eqref{chiiH=drinchiiH}, the second tensor factor equals $\chii^\H_j$.
For the first factor, the presence of $f_{Q^{-1}}$ instead of $\drin$ amounts
to replacing the braiding by the opposite braiding in the \rep\ morphism, and 
thus (compare the corresponding remark after \eqref{Qrep2})
\eqref{chiiH=drinchiiH} gives again an \H-character, but now for the dual 
module. Together this yields \eqref{chiKX-chiL-chiL}.
\end{proof}


\subsection{The character of the bulk Frobenius algebra}

The second coend we need is the one for the functor from 
$\HMod\op \,{\times}\, \HMod$ to \HBimod\ that on objects acts as 
$(U,V) \,{\mapsto}\, U^\vee \,{\boxtimes}\, V$; we denote it by
  \be
  \F = \coen {U\in\HMod} U^\vee \,{\boxtimes}\, U \,.
  \ee
As already suggested by the chosen notation, \F\ is nothing but the 
coregular bimodule, i.e.\ the \emph{bulk Frobenius algebra} featuring in
Theorem \ref{thm1}, with dinatural family coinciding, as linear maps,
with the one of the \chirhh\ in \eqref{pic-iU}
(for details see \cite[App.\,A.2]{fuSs3}).

\medskip

We have now collected all ingredients for establishing Theorem \ref{thm2}.
\\[5pt]
\emph{Proof of Theorem} \ref{thm2}.
\\
According to \eqref{F-H-bimod-char}, for $X \,{=}\, \F$ the coefficients
$n_{i,j}$ in \eqref{chiABX}, and thus those in 
\eqref{chiKX=nij..}, are given by the entries $c_{i,j}$ of 
the Cartan matrix of \HMod. Using that $\overline{\overline i} \,{=}\, i$,
we thus arrive at \eqref{eq:thm2}.
\hfill{$\Box$}


\section{Outlook}

The quest for a classification of modular invariant partition functions has 
been an important activity in mathematical physics in the late 1980s and 
early 1990s. Nowadays it may be considered as
superseded by approaches based on category-theoretic tools.
In retrospect it is surprising how far one could get in this quest by imposing
only a few convenient necessary conditions. The result by Max Kreuzer and Bert 
Schellekens \cite{krSc} is still the best available and, most probably, the 
best possible \emph{systematic} result of this activity.

The Kreuzer-Schellekens classification also played a central role in 
developments that led to the modern more mathematical approach to rational
conformal field theory \cite{fuRs4,fjfrs}. This approach had in particular 
to reproduce their beautiful result, and indeed \cite{fuRs9} it does.

Today, one important activity is concerned with logarithmic conformal
field theories, which amounts to dropping the condition of semisimplicity.
For such theories, the only systematic information about torus partition
functions seems to be the one about the Cardy-Cartan invariant discussed here
(together with some automorphism-twisted versions \cite[Sect.\,6]{fuSs3}).
It is therefore encouraging that simple current symmetries, which are a crucial
input for the Kreuzer-Schellekens result, appear to occur in logarithmic CFT
\cite[Rem.\,5.3.2]{fhst} as well. Moreover, module categories over non-semisimple 
tensor categories which only have invertible simple objects have been classified
under certain assumptions \cite{gaMom}. It is quite reasonable to expect that 
these results can be combined with the structural insight in the 
Kreuzer-Schellekens formula to give a handle on partition functions of 
logarithmic CFTs that are not of the Cardy-Cartan form. But this extension of 
Max' ideas still awaits its realization.


 \vskip 2.5em
\noindent{\sc Acknowledgments:}
We thank Rolf Farnsteiner for very helpful discussions.
 \\
JF is largely supported by VR under project no.\ 621-2009-3993.
CSc is partially supported by the Collaborative Research Centre 676 
``Particles, Strings and the Early Universe
- the Structure of Matter and Space-Time''
and by the DFG Priority Programme 1388 ``Representation Theory''.

\newpage

 \newcommand\wb{\,\linebreak[0]} \def\wB {$\,$\wb}
 \newcommand\Bi[2]    {\bibitem[#2]{#1}}
 \newcommand\inBo[9]  {{\em #9}, in:\ {\em #1}, {#2}\ ({#3}, {#4} {#5}), p.\ {#6--#7} }
 \newcommand\inBO[9]  {{\em #9}, in:\ {\em #1}, {#2}\ ({#3}, {#4} {#5}), p.\ {#6--#7} {{\tt [#8]}}}
 \newcommand\J[7]     {{\em #7}, {#1} {#2} ({#3}) {#4--#5} {{\tt [#6]}}}
 \newcommand\JO[6]    {{\em #6}, {#1} {#2} ({#3}) {#4--#5} }
 \newcommand\JP[7]    {{\em #7}, {#1} ({#3}) {{\tt [#6]}}}
 \newcommand\BOOK[4]  {{\em #1\/} ({#2}, {#3} {#4})}
 \newcommand\PhD[2]   {{\em #2}, Ph.D.\ thesis #1}
 \newcommand\Prep[2]  {{\em #2}, preprint {\tt #1}}
 \def\adma  {Adv.\wb Math.}
 \def\aspm  {Adv.\wb Stu\-dies\wB in\wB Pure\wB Math.}
 \def\amjm  {Amer.\wb J.\wb Math.}
 \def\atmp  {Adv.\wb Theor.\wb Math.\wb Phys.}
 \def\coma  {Con\-temp.\wb Math.}
 \def\comp  {Com\-mun.\wb Math.\wb Phys.}
 \def\duke  {Duke\wB Math.\wb J.}
 \def\ijmp  {Int.\wb J.\wb Mod.\wb Phys.\ A}
 \def\jhep  {J.\wb High\wB Energy\wB Phys.}
 \def\joal  {J.\wB Al\-ge\-bra}
 \def\jopa  {J.\wb Phys.\ A}
 \def\joms  {J.\wb Math.\wb Sci.}
 \def\jktr  {J.\wB Knot\wB Theory\wB and\wB its\wB Ramif.}
 \def\jpaa  {J.\wB Pure\wB Appl.\wb Alg.}
 \def\jram  {J.\wB rei\-ne\wB an\-gew.\wb Math.}
 \def\momj  {Mos\-cow\wB Math.\wb J.}
 \def\nupb  {Nucl.\wb Phys.\ B}
 \def\pams  {Proc.\wb Amer.\wb Math.\wb Soc.}
 \def\plms  {Proc.\wB Lon\-don\wB Math.\wb Soc.}
 \def\ruma  {Revista de la Uni\'on Matem\'atica Argentina}
 \def\sema  {Selecta\wB Mathematica}
 \def\slnm  {Sprin\-ger\wB Lecture\wB Notes\wB in\wB Mathematics}
 \def\taac  {Theo\-ry\wB and\wB Appl.\wb Cat.}
 \def\taia  {Top\-o\-lo\-gy\wB and\wB its\wB Appl.}
 \def\trgr  {Trans\-form.\wB Groups}

\small

\end{document}